\documentclass[hidelinks,onefignum,onetabnum]{siamart220329}

\makeatletter 
\@mparswitchfalse
\makeatother
\normalmarginpar


\usepackage{lipsum}
\usepackage{amsfonts}
\usepackage{graphicx}
\usepackage{epstopdf}
\usepackage{algorithmic}
\usepackage{bm}
\usepackage{todonotes}
\usepackage{mathtools}
\usepackage{mathrsfs}
\usepackage{xcolor}
\usepackage{chemformula}

\usepackage{tikz}
\usepackage{tikzit}

\tikzstyle{GreenNeuron}=[fill={rgb,255: red,0; green,128; blue,128}, draw={rgb,255: red,0; green,128; blue,128}, shape=circle]
\tikzstyle{BlueNeuron}=[fill={rgb,255: red,0; green,0; blue,173}, draw=none, shape=circle]
\tikzstyle{RedNeuron}=[fill={rgb,255: red,202; green,0; blue,0}, draw=none, shape=circle]
\tikzstyle{BlackCirc}=[fill=black, draw=black, shape=circle, font={\tiny}, scale=0.4]
\tikzstyle{tinyGreen}=[fill={rgb,255: red,0; green,128; blue,128}, draw={rgb,255: red,0; green,128; blue,128}, shape=circle, scale=0.4]
\tikzstyle{tinyRed}=[fill={rgb,255: red,202; green,0; blue,0}, draw={rgb,255: red,202; green,0; blue,0}, shape=circle, scale=0.4]
\tikzstyle{RedDisc}=[fill=none, draw={rgb,255: red,191; green,0; blue,64}, shape=circle, scale=0.75]
\tikzstyle{smallRedDisk}=[fill=none, draw={rgb,255: red,191; green,0; blue,64}, shape=circle, scale=0.6]
\tikzstyle{BigRedDisk}=[fill=none, draw={rgb,255: red,191; green,0; blue,64}, shape=circle, scale=0.9]
\tikzstyle{SuperBigDisckRed}=[fill=none, draw={rgb,255: red,191; green,0; blue,64}, shape=circle, scale=1.05]
\tikzstyle{tinyBlack}=[fill=black, draw=black, shape=circle, scale=0.4]

\tikzstyle{Arrow}=[->, draw=black]
\tikzstyle{Green}=[-, draw={rgb,255: red,0; green,128; blue,128}, thick]
\tikzstyle{GreenDashed}=[-, draw={rgb,255: red,0; green,128; blue,128}, thick, dashed]
\tikzstyle{Dashed}=[-, dashed]
\tikzstyle{RedDashed}=[-, thick, dashed, draw={rgb,255: red,202; green,0; blue,0}]
\tikzstyle{BlueDashed}=[-, draw={rgb,255: red,0; green,0; blue,173}, thick, dashed]
\tikzstyle{dotRed}=[-, draw={rgb,255: red,202; green,0; blue,0}, dotted]
\tikzstyle{Thikblue}=[-, draw={rgb,255: red,0; green,0; blue,173}, thick]
\tikzstyle{ArrowRed}=[draw={rgb,255: red,191; green,0; blue,64}, ->]
\tikzstyle{doubleRedArrow}=[->, draw={rgb,255: red,191; green,0; blue,64}, double]
\tikzstyle{blueArrow}=[draw=blue, ->]
\tikzstyle{spring}=[-, spring]
\tikzstyle{LargeSpring}=[-, largeSpring]
\tikzstyle{thikRedArrow}=[->, thick, draw={rgb,255: red,191; green,0; blue,64}]
\tikzstyle{empty}=[-, draw=none]
\tikzstyle{greenArrow}=[->, draw={rgb,255: red,0; green,128; blue,128}]
\tikzstyle{dubleArrow}=[<->]
\tikzstyle{dashedArrow}=[dashed, ->]
\tikzstyle{double Green Arrow}=[<->, draw={rgb,255: red,0; green,128; blue,128}]

\usetikzlibrary{3d,perspective}
\tikzset
{
  axis/.style={thick,-latex},
  my view/.style={3d view={65}{20}},
  nutation/.style={rotate around x=\nut}
}
\usetikzlibrary{patterns,snakes}
\usetikzlibrary{arrows.meta} 
\tikzstyle{spring}=[line width=0.8,blue!7!black!80,snake=coil,segment amplitude=1,segment length=1.25,line cap=round]
\tikzstyle{largeSpring}=[line width=0.8,blue!7!black!80,snake=coil,segment amplitude=2,segment length=3.25,line cap=round]
\colorlet{my cyan}{cyan!50}
\definecolor{my magenta}{RGB}{0,128,128}
\definecolor{verde}{RGB}{0,128,128}
\usepackage{pgfplots}
\pgfplotsset{compat = newest}
\usetikzlibrary{intersections, pgfplots.fillbetween}
\ifpdf
  \DeclareGraphicsExtensions{.eps,.pdf,.png,.jpg}
\else
  \DeclareGraphicsExtensions{.eps}
\fi


\newsiamremark{remark}{Remark}
\newsiamremark{hypothesis}{Hypothesis}
\crefname{hypothesis}{Hypothesis}{Hypotheses}
\newsiamthm{claim}{Claim}

\headers{Inviscid compressible Leslie--Ericksen equations}{Patrick E.~Farrell,  Giovanni Russo and Umberto Zerbinati}

\title{\mytitle\thanks{Submitted to the editors DATE.
\funding{PEF was supported by the Engineering and Physical Sciences Research Council [EPSRC grants EP/R029423/1 and EP/W026163/1].}}}

\author{
  Patrick E.~Farrell\thanks{Mathematical Institute, University of Oxford, Oxford, United Kingdom 
  (\email{patrick.farrell@maths.ox.ac.uk})} 
  \and
  Giovanni Russo\thanks{Departement of Mathematics, University of Catania, Catania CT, Italy  
  (\email{russo@dmi.unict.it}).} 
  \and
  Umberto Zerbinati\thanks{Mathematical Institute, University of Oxford, Oxford, United Kingdom 
  (\email{umberto.zerbinati@maths.ox.ac.uk})} 
} 

\usepackage{amsopn}

\newcommand{\mytitle}{Kinetic derivation of an inviscid compressible Leslie--Ericksen equation for rarified calamitic gases}
\let \vec \bm
\DeclarePairedDelimiter\abs{\lvert}{\rvert}
\DeclarePairedDelimiter\norm{\lVert}{\rVert}
\newcommand{\cchevrons}[1]{\langle\!\langle #1 \rangle\!\rangle}

\DeclareMathOperator{\tr}{tr}

\newcounter{footnoteInText}
\newcounter{footnoteInNote}
\newcommand{\fnmark}{\stepcounter{footnoteInText}\setcounter{footnote}{\value{footnoteInText}}\addtocounter{footnote}{-1}\footnotemark}
\newcommand{\fntext}[1]{\stepcounter{footnoteInNote}\setcounter{footnote}{\value{footnoteInNote}}\footnotetext{#1}}

\newcommand{\gr}[1]{\textcolor{orange}{#1}}

\colorlet{orange}{black}

\ifpdf
\hypersetup{
  pdftitle={\mytitle}
  pdfauthor={Patrick E.~Farrell and Giovanni Russo and Umberto Zerbinati}
}
\fi

\begin{document}

\maketitle

\begin{abstract}
Nematic ordering describes the phenomenon where anisotropic molecules tend to locally align, like matches in a matchbox. This ordering can arise in solids (as nematic elastomers), liquids (as liquid crystals), and in gases. In the 1940s, Onsager described how nematic ordering can arise in dilute colloidal suspensions from the molecular point of view. However, the kinetic theory of nonspherical molecules has not, thus far, accounted for phenomena relating to the presence of nematic ordering. 

In this work we develop a kinetic theory for the behavior of rarified calamitic (rodlike) gases in the presence of nematic ordering.
Building on previous work by Curtiss, we derive from kinetic theory the rate of work hypothesis that forms the starting point for Leslie--Ericksen theory. We incorporate ideas from the variational theory of nematic liquid crystals to create a moment closure that preserves the coupling between the laws of linear and angular momentum.
The coupling between these laws is a key feature of our theory, in contrast to the kinetic theory proposed by {Curtiss \& Dahler}, where the couple stress tensor is assumed to be zero.
This coupling allows the characterization of anisotropic phenomena arising from the nematic ordering.
Furthermore, the theory leads to an energy functional that is a compressible variant of the classical Oseen--Frank energy (with a pressure-dependent Frank constant) and to an {inviscid} compressible analogue of the Leslie--Ericksen equations. The emergence of compressible aspects in the theory for nematic fluids enhances our understanding of these complex systems.
\end{abstract}

\begin{keywords}
  Kinetic theory, non-spherical rarefied gases, nematic ordering, Oseen--Frank, Leslie--Ericksen
\end{keywords}

\begin{MSCcodes}
  82C40, 82D05, 82D30
\end{MSCcodes}

\section{Introduction}
In recent years there has been much interest in the mathematical modeling of liquid crystals~\cite{ball}, both due to their important applications and the mathematical beauty of the arising theories.
Liquid crystals are the core technology of the liquid crystal display industry, and are increasingly employed for the development of novel materials~\cite{majumdarEtAll,caiEtAll,kitstonGeisow}.
In many liquid crystals, nematic ordering arises due to the calamitic (rodlike) nature of the constituent molecules; the short-range interaction potential causes nearby particles to be aligned, or the molecules are aligned by the action of external fields.

In this work we develop a kinetic theory for rarified calamitic gases in the presence of nematic ordering. This is a first step towards a kinetic theory of nematic ordering in calamitic fluids, substances with very low shear modulus whose constituent molecules are rodlike. This class encompasses both many liquid crystals, and many polyatomic gases, such as \ch{N2}, \ch{CO}, \ch{NO}, and \gr{\ch{H2}}.

In particular, we focus on calamitic molecules, as a generic representative of this class. A kinetic theory of nonspherical gases was developed by C.~F.~Curtiss and his collaborators~\cite{curtissI,curtissII,curtissIII,curtissIV}, and was later extended to general nonspherical molecules~\cite{curtissV,dahlerSatherI,dahlerSandlerII}.
An overview of the theory of hard convex body fluids is given by Allen et al.~\cite{allenEtAll}. \text{McCourt et al.}~\cite{McCourt} have applied the Boltzmann--Curtiss and the Waldmann--Snider equations to study the behavior of polyatomic gases under the effect of external fields.

We will build on the original theory proposed by Curtiss, relying on ideas that have been developed for ordered fluids with a nematic nature.
Particularly useful will be ideas developed in the context of the variational theory of nematic liquid crystals~\cite{virga,sonnetVirga,stewart,deGennes}.
These ideas will allow us to develop a moment closure that preserves the coupling between the laws of linear and angular momentum.
The particular closure proposed here will allow the characterization of anisotropic phenomena caused by the emergence of a nematic ordering.

The theory leads to a new, kinetically-motivated energy functional for the nematic configuration that is pressure-dependent. The energy functional is a compressible variant of the familiar one-constant Oseen--Frank energy.
This represents a key difference with the standard variational theories of nematic liquid crystals such as the Oseen--Frank and Landau--de Gennes theories~\cite{oseen,frank,virga,deGennes}. Compressible phenomena in liquid crystals have been studied from the molecular point of view~\cite{gelbartBenShaul,straley}, with a focus on calculating the dependence of the Frank constants on the fluid density.
Our theory differs from the previous work on polyatomic gases of \text{McCourt et al.} in that we study the regime where the nematic ordering is fully established by the action of an external field.

We envision this theory as a first step towards the development of a kinetic theory for dense calamitic fluids in the presence of nematic ordering, which may inform the calculation of material constants required in the standard variational theories.

\section{The Boltzmann--Curtiss equation}

Our starting point will be a reformulation of the Boltzmann equation for calamitic molecules interacting by excluded volume, known as the Boltzmann--Curtiss equation \cite{curtissI,curtissV}:
\begin{equation}
	\partial_t f + \nabla_{\vec{q}}\cdot(\vec{v}f)+\nabla_{\vec{\alpha}}
    \cdot(\dot{\vec{\alpha}}f) = C[f,f],\label{eq:boltzman}
\end{equation}
where the usual configuration space of the Boltzmann equation, consisting of position and velocity $(\vec{q},\vec{v})\in \mathbb{R}^3\times \mathbb{R}^3$,
is enlarged to also include the Euler angles describing the orientation of each molecule and its angular velocity. We employ as configuration space the position and Euler angles, and their time derivatives
$(\vec{q},\vec{v},\vec{\alpha},\dot{\vec{\alpha}})\in \mathbb{R}^{3} \times \mathbb{R}^{3} \times \mathbb{R}^{3} \times \mathbb{R}^{3} \simeq \mathbb{R}^{12}$
(all notation is described in Table \ref{tab:notation})\footnote{We use $(\vec{v}, \dot{\vec{\alpha}})$ instead of $(\vec{p}, \vec{\varsigma})$ as this choice is more convenient for our calculations. The Hessian of the Lagrangian remains positive-definite in these coordinates, ensuring the well-posedness of the Legendre transform. Details are given in the supplementary material, \cref{sec:rationalMechanics}.}.

The collision operator originally proposed for the Boltzmann--Curtiss equation is of the form
\begin{equation} \label{eq:collisionop}
 \color{orange} C[f,g] \coloneqq   \int\!\!\!\!\int\!\!\!\!\int\!\!\!\!\int (g^\prime f^\prime-gf)(\vec{k}\cdot\vec{\mathfrak{g}})dSd\vec{k}d\vec{p}_2d\vec{\alpha}_2d\vec{\varsigma}_2,
\end{equation}
where $\mathfrak{g}$ is the relative velocity of the point of contact and $dS\mathrm{d}\vec{k}$ is the surface element of the excluded volume. The excluded volume of two convex rigid bodies is the volume of the first body that cannot be accessed by the second body due to the presence of the first body, as depicted in Figure \ref{fig:excludedVolume}.
{In the collision operator we choose to integrate with respect to $\vec{\varsigma}$, i.e.~the conjugate moment to the Euler angles, rather than the angular velocity or the total time derivative of the Euler angles. This choice is carefully motivated in Appendix \ref{sec:BoltzmannCurtissCollision}.} 
\begin{figure}[h]
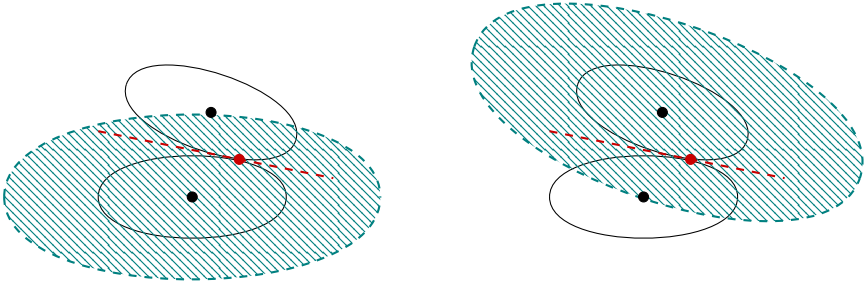

    \caption{A section of the excluded volume between two convex molecules is represented by the shaded green region. We also draw the tangent plane to both bodies at the point of contact in red.}
    \centering
    \vspace{-0.5cm}
    \tikzfig{Figures/ExcludedVolume}
    \label{fig:excludedVolume}
\end{figure}
The pre-collisional and post-collisional distribution functions are denoted by $f, g$ and $f', g'$, respectively.

We know that the collision operator $C[f, f]$ is $L^2$-orthogonal to any invariant preserved in the collision.
In particular, it is known that the collision operator $C[f,f]$ conserves~\cite{stronge}:
\begin{enumerate}
	\item $\psi_1 = 1$, the number of particles in the system;
	\item $\psi_2 = m\vec{v}$, the linear momentum;
	\item $\psi_3 = \mathbb{I}\,\vec{\omega}+\vec{q}\times m\vec{v}$, the angular momentum;
	\item $\psi_4 = \frac{1}{2}m(\vec{v}\cdot \vec{v})+\frac{1}{2}\vec{\omega}\cdot \mathbb{I}\, \vec{\omega}$, the kinetic energy of the system.
\end{enumerate}
\begin{table}[htbp]
	\footnotesize
	\caption{Table of notation.}\label{tab:notation}
  \def\arraystretch{1.2}
		\begin{tabular}{ccc} \hline
			$\vec{q}$ &--& position of the center of mass\\
			$\vec{p}$ &--& linear momentum of the center of mass\\ 
      $\vec{v}$ &--& velocity of the center of mass\\
			$\vec{\omega}$ &--& microscopic angular velocity\\
      $\vec{\alpha}$ &--& Euler angles describing the orientation\\
      $\vec{\varsigma}$ &--& conjugate moment to the Euler angles\\
			$f(\vec{q},\vec{v},\vec{\alpha},\vec{\alpha},t)$ &--
    & solution of the Boltzmann--Curtiss equation\\
      $\vec{\mathfrak{g}}$ & $\vec{p}_1-\vec{p}_2+\vec{\omega}_1\times\vec{g}_1-\vec{\omega}_2\times\vec{g}_2$ & relative velocity of the point of contact\\
			$m$ &--
    & mass of a single molecule\\
			$n(\vec{q},t)$ & $\int\!\!\!\int\!\!\!\int f(\vec{q},\vec{v},\vec{\alpha},\vec{\omega},t) d\vec{v}d\vec{\alpha}d\vec{\omega}$ 
    & particle number density\\
      $\rho(\vec{q},t)$ & $mn(\vec{q},t)$ 
        & fluid mass density\\
			$\cchevrons{\cdot}$ & $\frac{1}{n}\int\!\!\!\int\!\!\!\int\cdot\,f(\vec{q},\vec{v},\vec{\alpha},\vec{\omega},t)d\vec{v}d\vec{\alpha}d\vec{\omega}$ & -- \\ 
			$\vec{v}_0(\vec{q},t)$ & $\cchevrons{\vec{v}}$ 
    & macroscopic stream velocity\\
      $\vec{V}(\vec{q},t)$ & $\vec{v}-\vec{v}_0$ 
        & peculiar velocity\\
      $\vec{\omega}_0(\vec{q},t)$ & $\cchevrons{\vec{\omega}}$ 
        & macroscopic stream angular velocity\\
      $\vec{\Omega}(\vec{q},t)$ & $\vec{\omega} - \vec{\omega}_0$ 
        & peculiar angular velocity\\
      $\vec{\eta}(\vec{q},t)$ & $\cchevrons{\mathbb{I}\vec{\omega}}$ 
        & macroscopic stream angular momentum\\
      $\overline{\vec{\mathbb{I}}}(\vec{q},t)$ & $\cchevrons{\mathbb{I}}$ 
        & macroscopic inertia tensor\\
      \shortstack{$I_1,\,I_2,\,I_3$ \vspace{0.25cm}} & \shortstack{--\vspace{0.25cm}} 
        & \shortstack{nondimensionalised microscopic \\ principal moment of inertia}\\
      $I$ & -- & $3\times 3$ identity matrix\\
      $\mathbb{P}$ & $\cchevrons{\vec{V}\otimes\vec{V}}$ 
        & stress tensor, also known as pressure tensor\\
      $\Pi(\vec{q},t)$ & $\cchevrons{\vec{v}\otimes\vec{v}}$ 
        & linear momentum flux tensor\\
      $\mathbb{M}$ & $\cchevrons{\vec{V}\otimes(\mathbb{I}\vec{\omega})}$
        & couple stress tensor\\
      $\Pi_c(\vec{q},t)$ & $\cchevrons{\vec{v}\otimes (\mathbb{I}\,\vec{\omega})}$ 
        & angular momentum flux tensor\\
      $\vec{\xi}$ & $\cchevrons{\varepsilon_{lki}(nmv_iv_k)\vec{e}_l}$ 
        & vector antisymmetric part of $\cchevrons{\vec{v}\otimes\vec{v}}$\\ 
      \color{orange}$\vec{\mathfrak{Q}}$ & \color{orange}$\frac{1}{2}\cchevrons{\vec{V}\Big(m\abs{\vec{V}}^2+\vec{\Omega}\cdot\mathbb{I}\vec{\Omega}\Big)}$ 
        & \color{orange} heat flux\\
      $\theta$ & $\frac{1}{2}m\abs{\vec{V}}^2+\frac{1}{2}\vec{\Omega}\cdot \mathbb{I}\vec{\Omega}$ 
        & peculiar kinetic energy\\
      $\psi_0(\vec{q},t)$ & $\cchevrons{\theta}$
        & internal energy\\
      $\psi(\vec{q},t)$ & $\frac{1}{2}\Big[m\abs{\vec{v}}^2+\vec{\omega}\cdot\mathbb{I}\vec{\omega}\Big]$ 
        & total energy\\
      $\psi_K(\vec{q},t)$ & $\frac{1}{2} \cchevrons{\Big[m\abs{\vec{v}_0}^2+\vec{\omega}_0\cdot\mathbb{I}\vec{\omega}_0\Big]}$ 
        & macroscopic kinetic energy\\
      $p_{K}(\vec{q},t)$ & $tr[\mathbb{P}^{(0)}]=\frac{\rho}{m}
      I_1 I_2 I_3\cchevrons{\theta}$ 
        & kinetic pressure \\
      $\vec{\nu}$ & \eqref{eq:TensorDecomposition} 
        & nematic director \\
      $k_B$ & $1.380649\times 10^{23}\;J\cdot K^{-1}$ 
        & Boltzmann constant \\
      $N_A$ & $6.02214076 \times 10^{23}\; mol^{-1}$
        & Avogadro constant \\
      $R$ & $N_Ak_B$  & universal gas constant \\
      $\varepsilon_{ijk}$ &--&Levi--Civita symbol\\
      \shortstack{$\mathcal{N}$ \vspace{0.25cm}} & \shortstack{--\vspace{0.25cm}} & \shortstack{number of degrees of freedom in the\\ microscopic Hamiltonian} \\
      \hline
		\end{tabular}
\end{table}
Testing \eqref{eq:boltzman} against a collision invariant $\psi$ and integrating by parts,
since
\begin{equation}
\lim_{\norm{\dot{\vec{\alpha}}}\to \infty}(\psi f)=0,
\end{equation}
we obtain
\begin{equation}
  \partial_t (n\cchevrons{\psi}) + \nabla_{\vec{q}} \cdot (n\cchevrons{\vec{v}\psi})-n\cchevrons{\dot{\vec{\alpha}}\cdot\nabla_{\vec{\alpha}}\psi} = 0,\label{eq:hydrodynamicEquation}
\end{equation}
where the symbol $\cchevrons{\cdot}$ denotes the average over $\vec{p}$, $\vec{\alpha}$ and $\vec{\varsigma}$, i.e.~
\begin{equation}
	\cchevrons{\psi} \coloneqq \frac{1}{n}\int\!\!\!\int\!\!\!\int\psi\,f({\vec{q}},\vec{v},\vec{\alpha},\gr{\vec{\varsigma}},t)d\vec{v}d\vec{\alpha}\gr{d\vec{\varsigma}}.
\end{equation}
Substituting the first collision invariant $\psi_1\equiv 1$ into \eqref{eq:hydrodynamicEquation}, we obtain the equation of conservation of number of particles:
\begin{equation}
  \partial_t n + \nabla_{\vec{q}}\cdot (n\vec{v}_0) = 0.\label{eq:conservationParticleNumber}
\end{equation}
Multiplying by $m$ derives the well-known law of the conservation of mass, i.e.~
\begin{equation}
  \partial_t \rho + \nabla_{\vec{q}}\cdot(\rho\vec{v}_0)=0.\label{eq:conservationMass}
\end{equation}

Substituting the second collision invariant $\psi_2=m\vec{v}$ into \eqref{eq:hydrodynamicEquation}, we obtain
\begin{equation}
  \partial_t(\rho \vec{v}_0)+\nabla_{\vec{q}}\cdot\Big[\rho\cchevrons{\vec{v}\otimes\vec{v}}\Big] = 0,\label{eq:linearMomentumRow}
\end{equation}
where $\cchevrons{\vec{v}\otimes \vec{v}}$ is also known as the linear momentum flux tensor and denoted as $\Pi$.
The above equation is a form of the law of linear momentum well known in continuum mechanics; to rewrite it in a more standard form,
we observe that introducing $\vec{v}_0 = \cchevrons{\vec{v}}$ we have
  \begin{align}
    \Pi = \cchevrons{\vec{v}\otimes\vec{v}}&=\cchevrons{\vec{V}\otimes\vec{V}} + \cchevrons{\vec{v}_0\otimes\vec{V}}+\cchevrons{\vec{V}\otimes\vec{v}_0}+\vec{v}_0\otimes\vec{v}_0\\
    &=\cchevrons{\vec{V}\otimes\vec{V}}+\vec{v}_0\otimes\vec{v}_0,\nonumber
  \end{align}
where $\cchevrons{\vec{v}_0\otimes\vec{V}}=\vec{v}_0 \otimes \cchevrons{\vec{V}}=0$ because $\cchevrons{\vec{V}}=0$.
Now we expand the time derivative in \eqref{eq:linearMomentumRow} and use \eqref{eq:conservationMass} to obtain
\begin{subequations}
  \begin{align}
    (\partial_t \rho)\vec{v}_0 + \rho(\partial_t\vec{v}_0) + \nabla_{\vec{q}}\cdot\Big[\rho \cchevrons{\vec{V}\otimes\vec{V}}+\rho\cchevrons{\vec{v}_0 \otimes \vec{v}_0}\Big]=0,\\ 
    -\Big[\nabla_{\vec{q}}\cdot(\rho \vec{v}_0)\Big]\vec{v}_0 + \rho(\partial_t\vec{v}_0) + \nabla_{\vec{q}}\cdot\Big[\rho \cchevrons{\vec{V}\otimes\vec{V}}+\rho\cchevrons{\vec{v}_0 \otimes \vec{v}_0}\Big]=0,\\
    \rho(\partial_t\vec{v}_0) +\rho\Big[(\nabla_{\vec{q}}\vec{v}_0)\vec{v}_0\Big]+\nabla_{\vec{q}}\cdot\Big[\rho \cchevrons{\vec{V}\otimes\vec{V}}\Big]=0,\\
    \rho \Big[\partial_t \vec{v}_0 + (\nabla_{\vec{q}}\vec{v}_0)\vec{v}_0\Big]+\nabla_{\vec{q}}\cdot (\rho\mathbb{P}) = 0,\label{eq:linearMomentum}
  \end{align}
\end{subequations}
where the last equation is the usual law of linear momentum and $\rho\mathbb{P}$ is the Cauchy stress tensor up to a change of sign.

Substituting the third collision invariant $\psi_3= \mathbb{I} \cdot \vec{\omega}+m(\vec{q}\times \vec{v})$ into \eqref{eq:hydrodynamicEquation}, we obtain
\begin{equation}
  \partial_t (n\vec{\eta}) + \nabla_{\vec{q}}\cdot \Big[n\cchevrons{\vec{v}\otimes \vec{\eta}}\Big]+\partial_t(nm\cchevrons{\vec{q}\times\vec{v}})+\nabla_{\vec{q}}\cdot \Big[nm \cchevrons{\vec{v}\otimes(\vec{q}\times\vec{v})}\Big]=0\label{eq:intermidiateVanishingCrossProduct},
\end{equation}
where $\vec{\eta}=\cchevrons{\mathbb{I}\vec{\omega}}$ is the intrinsic angular momentum field which measures the angular momentum of the fluid per unit mass.
To simplify the last equation we will use the following well-known lemma.
\begin{lemma}[Result 3.5 in \cite{gonzalezStuart}]\label{lem:GonzalezStuart}
  Given a second order tensor $S=S_{ij}\vec{e}_i\otimes\vec{e}_j$, the following identity holds:
  \begin{equation}
    \varepsilon_{lki}\partial_{q_j}(S_{ij})r_k \vec{e}_l = -\varepsilon_{lki} S_{ik}\vec{e}_l+\partial_{q_j}(\varepsilon_{lki}q_kS_{ij})\vec{e}_l.
  \end{equation}
\end{lemma}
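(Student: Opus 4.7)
The plan is to prove this identity by a direct application of the Leibniz product rule on the right-hand side, together with the contraction $\partial_{q_j} q_k = \delta_{jk}$ and relabelling of summation indices. I read the $r_k$ on the left-hand side as a typographical variant of $q_k$ (so that differentiation and the position vector share the same variable), which makes the statement dimensionally and index-wise consistent with the right-hand side.

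First, I would expand the divergence term on the right, treating $\varepsilon_{lki}$ as constant and applying the product rule to the product $q_k S_{ij}$:
\begin{equation}
  \partial_{q_j}\bigl(\varepsilon_{lki} q_k S_{ij}\bigr)
  = \varepsilon_{lki}\bigl(\partial_{q_j} q_k\bigr) S_{ij}
  + \varepsilon_{lki}\, q_k\, \partial_{q_j}(S_{ij}).
\end{equation}
Next I would use $\partial_{q_j} q_k = \delta_{jk}$ to collapse the first summand, obtaining $\varepsilon_{lki}\delta_{jk}S_{ij}\vec{e}_l = \varepsilon_{lji}S_{ij}\vec{e}_l$, after which I rename the dummy index $j$ to $k$ to write this contribution in the form $\varepsilon_{lki}S_{ik}\vec{e}_l$ that appears in the statement. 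Isolating the term that carries the position vector yields
\begin{equation}
  \varepsilon_{lki}\, q_k\, \partial_{q_j}(S_{ij})\vec{e}_l
  = -\varepsilon_{lki} S_{ik}\vec{e}_l
  + \partial_{q_j}\bigl(\varepsilon_{lki} q_k S_{ij}\bigr)\vec{e}_l,
\end{equation}
which is precisely the claimed identity.

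There is no real obstacle here: the only subtlety is bookkeeping with Einstein summation conventions, in particular making sure that the relabelling $j \leftrightarrow k$ in the Kronecker-$\delta$ term is performed consistently inside the Levi--Civita symbol. In the context of the paper, the identity will subsequently be invoked with $S = n m \cchevrons{\vec{v}\otimes\vec{v}}$ in order to simplify the angular momentum flux terms in \eqref{eq:intermidiateVanishingCrossProduct}, converting the $\vec{q}\times\vec{v}$ factor into a pure antisymmetric contraction of the stress tensor.
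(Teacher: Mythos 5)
Your proof is correct. Note that the paper does not actually supply a proof of this lemma; it is quoted as Result~3.5 of Gonzalez and Stuart, so there is no ``paper's own argument'' to compare against. Your direct product-rule computation, using $\partial_{q_j}q_k = \delta_{jk}$ and the dummy-index relabelling $j \to k$ to recast $\varepsilon_{lji}S_{ij}$ as $\varepsilon_{lki}S_{ik}$, is the standard and expected derivation, and your reading of $r_k$ as a typographical slip for $q_k$ is the correct one (the surrounding text uses $q$ for position, and the identity only makes sense if the differentiated variable coincides with the position vector being contracted against the Levi--Civita symbol).
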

We will also use the following lemma. It was used but not proven in \cite{curtissI}, so we include the proof here for completeness.
\begin{lemma}\label{lem:vanishingCrossProduct}
  The following identity holds:
  \begin{equation}
    \partial_t (nm \cchevrons{\vec{q}\times \vec{v}})+\nabla_{\vec{q}} \cdot \Big[ nm \cchevrons{\vec{v}\otimes(\vec{q}\times\vec{v})}\Big]= -\vec{\xi}, 
\end{equation}
where
\begin{equation} \label{eq:definexi}
\vec{\xi} \coloneqq \cchevrons{\varepsilon_{lki}(mnv_iv_k)\vec{e}_l}.
\end{equation}
\end{lemma}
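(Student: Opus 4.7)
The plan is to carry out the index-level calculation directly, exploiting the fact that the macroscopic coordinate $\vec{q}$ is not a variable of integration in $\cchevrons{\cdot}$ and can be pulled outside. Writing $(\vec{q}\times \vec{v})_l = \varepsilon_{lki}q_k v_i$ and using $\rho=nm$, the two quantities on the left-hand side become, componentwise,
\[
(nm\cchevrons{\vec{q}\times \vec{v}})_l = \varepsilon_{lki}\,q_k\,(\rho v_{0i}),
\qquad
(nm\cchevrons{\vec{v}\otimes(\vec{q}\times\vec{v})})_{jl} = \varepsilon_{lki}\,q_k\,\rho\cchevrons{v_jv_i}.
\]

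First I take the time derivative of the first expression, obtaining $\varepsilon_{lki}q_k\,\partial_t(\rho v_{0i})$. Next I take the divergence of the second with the convention $(\nabla_{\vec{q}}\cdot A)_l = \partial_{q_j}A_{jl}$, which renders \eqref{eq:linearMomentumRow} as $\partial_t(\rho v_{0i})+\partial_{q_j}(\rho\cchevrons{v_jv_i})=0$. Applying the product rule (equivalently Lemma~\ref{lem:GonzalezStuart} with the choice $S_{ij}=\rho\cchevrons{v_jv_i}$), the derivative $\partial_{q_j}$ splits into a piece coming from $\partial_{q_j}q_k=\delta_{jk}$ and a piece differentiating the stress-type factor, yielding
\[
\varepsilon_{lki}\,\rho\cchevrons{v_kv_i} \;+\; \varepsilon_{lki}\,q_k\,\partial_{q_j}(\rho\cchevrons{v_jv_i}).
\]

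Summing the two contributions, the terms weighted by $\varepsilon_{lki}q_k$ combine into $\varepsilon_{lki}q_k\bigl[\partial_t(\rho v_{0i})+\partial_{q_j}(\rho\cchevrons{v_jv_i})\bigr]$, which vanishes by the linear-momentum law \eqref{eq:linearMomentumRow}. The only surviving residue is $\varepsilon_{lki}\,\rho\cchevrons{v_kv_i}\,\vec{e}_l$. To identify this with $-\vec{\xi}$, I relabel the dummy indices $i\leftrightarrow k$ and use $\varepsilon_{lik}=-\varepsilon_{lki}$, which rewrites the residue as $-\varepsilon_{lki}\,\rho\cchevrons{v_iv_k}\,\vec{e}_l = -\vec{\xi}$ by \eqref{eq:definexi}.

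The main obstacle is purely notational: one must fix a divergence convention consistent with \eqref{eq:linearMomentumRow} so that the cancellation against the law of linear momentum goes through, and one has to track carefully which slot of $\vec{v}\otimes(\vec{q}\times\vec{v})$ carries the cross-product index when invoking Lemma~\ref{lem:GonzalezStuart}; any mis-ordering either destroys the cancellation with \eqref{eq:linearMomentumRow} or produces the wrong sign in the final identification with $\vec{\xi}$.
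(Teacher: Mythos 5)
Your proof is correct and takes essentially the same approach as the paper: both hinge on the linear-momentum law \eqref{eq:linearMomentumRow} together with a product rule (the content of Lemma~\ref{lem:GonzalezStuart}), the only difference being that the paper begins by crossing \eqref{eq:linearMomentumRow} with $\vec{q}$ and then unpacks the terms, whereas you expand the left-hand side of the lemma directly and invoke \eqref{eq:linearMomentumRow} to kill the $\varepsilon_{lki}q_k$-weighted piece. Your version is cleaner in tracking the index conventions and the position of $\vec{q}$ outside $\cchevrons{\cdot}$; the only cosmetic point worth noting is that the surviving residue $\varepsilon_{lki}\rho\cchevrons{v_kv_i}\vec{e}_l$ equals both $+\vec{\xi}$ (by symmetry of $\cchevrons{v_kv_i}$) and $-\vec{\xi}$ (by your dummy-index relabeling), which is consistent precisely because $\vec{\xi}=0$ as in Proposition~\ref{prop:symmetricPi}.
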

\begin{proof}
We begin by taking a cross product with $\vec{q}$ on both sides of \eqref{eq:linearMomentumRow} to obtain
  \begin{equation}
    \partial_t (\rho\vec{v}_0)\times \vec{q} + \Big[\nabla_{\vec{q}}\cdot(\rho \Pi)\Big]\times \vec{q} = 0.
  \end{equation}
  We then expand $\rho$, $\Pi$ and $\vec{v}_0$ in order to bring the cross product with $\vec{q}$ inside $\cchevrons{\cdot}$ where possible. Furthermore, since $\cchevrons{\cdot}$ does not depend on time we can bring it inside the time derivative to obtain
  \begin{equation}
    \cchevrons{\partial_t(nm\vec{v})\times\vec{q}}+\cchevrons{\nabla_{\vec{q}}\cdot(nm\vec{v}\otimes\vec{v})\times\vec{q}}= 0.
  \end{equation}
where the term $\cchevrons{nm\vec{v}\times\partial_t\vec{q}}$ disappears because $\partial_t\vec{q}\times\vec{v}=0$.
Now we rewrite the divergence term using tensor notation and apply Lemma \ref{lem:GonzalezStuart}:
\begin{equation}
  \partial_{q_j}(nmv_iv_j)q_k\varepsilon_{lki}\vec{e}_l=-\varepsilon_{lki}(nmv_iv_k)\vec{e}_l + \partial_{q_j}(nm\varepsilon_{lki}q_kv_iv_j)\vec{e}_l.
\end{equation}
Defining $\vec{\xi}$ as in \eqref{eq:definexi}, the above expression can be rewritten as:
\begin{equation}
  \cchevrons{\nabla_{\vec{q}}\cdot(nm\vec{v}\otimes\vec{v})\times \vec{q}} = -\vec{\xi}+\nabla_{\vec{q}}\cdot \Big[nm\vec{v}\otimes(\vec{v}\times\vec{q})\Big].
\end{equation}
\end{proof}
Applying now Lemma \ref{lem:vanishingCrossProduct} to \eqref{eq:intermidiateVanishingCrossProduct} we get
\begin{equation}
  \partial_t (n\vec{\eta})+\nabla_{\vec{q}}\cdot \left(n\Pi_c\right) = \vec{\xi},
\end{equation}
where $\Pi_c=\cchevrons{\vec{v}\otimes (\mathbb{I}\,\vec{\omega})}$ is the angular momentum flux tensor and the subscript $\cdot_{c}$ denotes its connection with the couple stress tensor that we will show in a moment.
As we did for the linear momentum flux tensor we would like to rewrite the angular momentum flux tensor in terms of the peculiar velocity, i.e.~
\begin{equation}
  \Pi_c = \cchevrons{\vec{v}_0\otimes (\mathbb{I}\,\vec{\omega})}+\cchevrons{\vec{V}\otimes(\mathbb{I}\,\vec{\omega})}.
\end{equation}
Expanding the time derivative and using \eqref{eq:conservationMass} we obtain
\begin{equation}
  \rho \Big[\partial_t \vec{\eta}+(\nabla_{\vec{q}}\vec{\eta})\vec{v}_0\Big] + \nabla_{\vec{q}}\cdot \left(\rho\mathbb{M}\right) = \vec{\xi},
\end{equation}
where $\mathbb{M}$ is defined as $\cchevrons{\vec{V}\otimes(\mathbb{I}\,\vec{\omega})}$, and it is the couple Cauchy stress tensor up to a change of sign.

In summary, systematically testing \eqref{eq:boltzman} against the first three collision invariants retrieves the usual governing equations of continuum mechanics, i.e.~the conservation laws of mass, of linear momentum, and the balance law of angular momentum:
\begin{subequations}
  \label{eq:continuumMechanics}
  \begin{gather}
    \partial_t \rho + \nabla_{\vec{q}}\cdot (\rho\vec{v}_0) =0,\label{eq:lawMassConservation}\\
    \rho\Big[\partial_t \vec{v}_0 + (\nabla_{\vec{q}}\vec{v}_0)\vec{v}_0\Big] + \nabla_{\vec{q}}\cdot(\rho\,\mathbb{P})=0,\label{eq:lawLinearMomentum}\\
    \rho\Big[\partial_t \vec{\eta} + (\nabla_{\vec{q}}\vec{\eta})\vec{v}_0\Big] +\nabla\cdot(\rho\,\mathbb{M}) = \vec{\xi}.\label{eq:lawAngularMomentum}
  \end{gather} 
\end{subequations}

Now we notice that the law of angular momentum
\eqref{eq:lawAngularMomentum}
can be simplified. Since $\Pi$ is by definition symmetric then $\vec{\xi}=0$, as formalised in the following proposition.
\begin{proposition}
  \label{prop:symmetricPi}
  Since $\Pi$ is symmetric by definition, we know that $\vec{\xi} = 0$.
\end{proposition}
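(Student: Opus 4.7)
The plan is to unpack the definitions in index notation and recognize $\vec{\xi}$ as the total contraction of a symmetric tensor with the antisymmetric Levi-Civita symbol. First I would pull the constants $n$ and $m$, along with the Levi-Civita symbol, outside of the average $\cchevrons{\cdot}$, which is permitted because $n$ and $m$ are independent of the microscopic variables and $\varepsilon_{lki}$ is a numerical symbol. This gives
\begin{equation}
\xi_l = nm\,\varepsilon_{lki}\cchevrons{v_i v_k} = nm\,\varepsilon_{lki}\,\Pi_{ik}.
\end{equation}

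Next, since $\Pi_{ik}=\cchevrons{v_iv_k}=\cchevrons{v_kv_i}=\Pi_{ki}$ by definition, relabelling the dummy indices $i\leftrightarrow k$ in the product $\varepsilon_{lki}\Pi_{ik}$ and using the antisymmetry $\varepsilon_{lik}=-\varepsilon_{lki}$ yields $\varepsilon_{lki}\Pi_{ik}=-\varepsilon_{lki}\Pi_{ik}$, so that $\xi_l = 0$ for every $l$. Equivalently, this is the standard fact that the double contraction of a symmetric tensor with an antisymmetric one vanishes.

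There is no real obstacle here: the statement is an elementary index-gymnastics consequence of the symmetry of $\Pi$, and the only thing worth being explicit about is justifying that $n$, $m$, and $\varepsilon_{lki}$ can be moved outside the bracket $\cchevrons{\cdot}$ before invoking symmetry.
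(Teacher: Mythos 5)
Your proof is correct and uses essentially the same argument as the paper: both exploit the symmetry of $\Pi$ together with the antisymmetry of $\varepsilon_{lki}$, with the paper writing the relation as $0=\varepsilon_{lki}nm[\Pi_{ki}-\Pi^T_{ki}]\vec{e}_l=2\vec{\xi}$ and you reaching the same conclusion by directly relabelling dummy indices. The only cosmetic difference is that you make explicit the step of pulling $n$, $m$, and $\varepsilon_{lki}$ outside the bracket $\cchevrons{\cdot}$, which the paper leaves implicit.
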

\begin{proof}
  Recall that $\Pi$ is defined as $\cchevrons{\vec{v}\otimes\vec{v}}$, and since $\Pi$ is symmetric we have that
  \begin{equation}
      0=\varepsilon_{lki}nm\Big[\Pi_{ki}-\Pi^T_{ki}\Big]\vec{e}_l=\varepsilon_{lki}nm\Big[\cchevrons{v_iv_k}-\cchevrons{v_kv_i}\Big]\vec{e}_l= 2\vec{\xi},
  \end{equation}
  where in the last equality we have used $\varepsilon_{lki}=-\varepsilon_{lik}$.  
\end{proof}
\begin{corollary}
  Considering equation \eqref{eq:lawAngularMomentum} when $\mathbb{M} = 0$ we have $\rho\dot{\vec{\eta}}=\vec{\xi}$.
  Therefore since $\vec{\xi} = 0$ we have that if $\mathbb{M} = 0$ then the angular momentum is conserved. 
\end{corollary}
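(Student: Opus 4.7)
The plan is to read off the corollary directly from \eqref{eq:lawAngularMomentum} by killing the couple-stress divergence term and then invoking Proposition \ref{prop:symmetricPi}. First I would set $\mathbb{M}=0$ in \eqref{eq:lawAngularMomentum}, so that the term $\nabla_{\vec{q}}\cdot(\rho\,\mathbb{M})$ drops out, leaving
\begin{equation*}
  \rho\Big[\partial_t \vec{\eta} + (\nabla_{\vec{q}}\vec{\eta})\vec{v}_0\Big] = \vec{\xi}.
\end{equation*}
I would then identify the bracket on the left as the material derivative $\dot{\vec{\eta}} = \partial_t \vec{\eta} + (\nabla_{\vec{q}}\vec{\eta})\vec{v}_0$ along the macroscopic stream velocity $\vec{v}_0$, yielding the first claim $\rho\dot{\vec{\eta}} = \vec{\xi}$.

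For the second claim I would simply invoke Proposition \ref{prop:symmetricPi}, which asserts $\vec{\xi} = 0$ as a consequence of the manifest symmetry of $\Pi = \cchevrons{\vec{v}\otimes \vec{v}}$. Substituting into the equation just derived gives $\rho\dot{\vec{\eta}} = 0$, i.e.\ $\vec{\eta}$ is conserved along fluid trajectories, which is the stated conservation of intrinsic angular momentum.

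There is no real obstacle here, since the argument is a one-line combination of two facts that are already in hand; the only thing worth flagging carefully is that ``conservation'' here means Lagrangian conservation along the flow of $\vec{v}_0$ (i.e.\ vanishing material derivative), rather than the local Eulerian conservation form $\partial_t(\rho\vec{\eta}) + \nabla_{\vec{q}}\cdot(\rho\vec{\eta}\otimes\vec{v}_0) = 0$; the latter also follows by combining with \eqref{eq:lawMassConservation} if one wishes to phrase the corollary in divergence form.
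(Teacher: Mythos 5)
Your argument is correct and matches the paper's (implicit) reasoning exactly: setting $\mathbb{M}=0$ in \eqref{eq:lawAngularMomentum} kills the divergence term, the bracket is the material derivative $\dot{\vec{\eta}}$, and Proposition~\ref{prop:symmetricPi} gives $\vec{\xi}=0$, hence $\rho\dot{\vec{\eta}}=0$. Your caveat about the Lagrangian versus Eulerian sense of ``conservation'' is a nice clarification but does not mark a departure from the paper's approach.
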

Notice that the same observation was also made in \cite{curtissV}, after which \text{Curtiss} and \text{Dahler} wrote:
\begin{quote}
From this, it follows that in the absence of external torques the spin angular
momentum density of a dilute gas is of no physical
consequence since it is coupled neither to the surroundings of the system nor to the density, velocity, and temperature fields of the fluid. Because of this we may
choose the $\mathbb{M}$ (and hence $\vec{\omega_0}$) to be identically zero.
\end{quote}
The next sections of the chapter will be devoted to developing a theory that will allow us to refute this conclusion.

\section{Leslie--Ericksen rate of work hypothesis}
\label{sec:LeslileRateOfWork}
To take into account the anisotropic behaviour that stems from the equation of angular momentum, our objective will be to find a coupling between \eqref{eq:lawLinearMomentum} and \eqref{eq:lawAngularMomentum}. Indeed, in the absence of such coupling, one can assume $\mathbb{M}$ is identically zero.
If $\mathbb{M}$ vanishes this will result in the inability of rarefied calamitic fluids to undergo reorientation under ultrasonic wave propagation, a phenomenon that has been observed in other denser nematic fluids \cite{bertolottiEtAll,scudieriEtAll}.
It is worth mentioning that such coupling exists in the standard dynamic theory of liquid crystals, arising as the Ericksen tensor in the Leslie--Ericksen equations.

For this reason, our first task will be to connect \eqref{eq:continuumMechanics} with the Leslie--Ericksen theory.
In order to achieve this we consider the fourth collision invariant $\psi_4=\frac{1}{2}m (\vec{v}\cdot\vec{v})+\frac{1}{2}\vec{\omega}\cdot \mathbb{I}\vec{\omega}$ and substitute it into \eqref{eq:hydrodynamicEquation}:
\begin{equation}
    \partial_t \Big(n\cchevrons{\frac{1}{2}m(\vec{v}\cdot\vec{v})+\frac{1}{2}\vec{\omega}\cdot\mathbb{I}\vec{\omega}}\Big)+\nabla_{\vec{q}}\cdot\Big[n\cchevrons{\vec{v}\frac{1}{2}m(\vec{v}\cdot\vec{v})+\vec{v}\frac{1}{2}\vec{\omega}\cdot\mathbb{I}\vec{\omega}}\Big]=0.
\end{equation}
We now rewrite the previous expression using the notion of peculiar velocity. In particular, using the identity $\cchevrons{\vec{v}\cdot\vec{v}} = \cchevrons{\vec{V}\cdot\vec{V}}+\cchevrons{\vec{v}_0\cdot \vec{v}_0}$ yields
\begin{align}
\partial_t \Big(n\cchevrons{\frac{1}{2}m (\vec{V}\cdot\vec{V})}\Big)&+\partial_t \Big(n\cchevrons{\frac{1}{2}\vec{\omega}\cdot \mathbb{I}\vec{\omega}}\Big)+\partial_t \Big(n\cchevrons{\frac{1}{2}m(\vec{v}_0\cdot \vec{v}_0)}\Big)\label{eq:energyHydrodynamicEq}\\
    &+\nabla_{\vec{q}}\cdot\Big[n\cchevrons{\vec{v}_0 \frac{1}{2}m(\vec{v}_0\cdot\vec{v}_0)}\Big]+\nabla_{\vec{q}}\cdot\Big[n\vec{v}_0\cchevrons{\frac{1}{2}m(\vec{V}\cdot\vec{V})}\Big]\nonumber\\
    &+\nabla_{\vec{q}}\cdot\Big[n\cchevrons{\vec{V}\frac{1}{2}m(\vec{V}\cdot\vec{V})}\Big]+\nabla_{\vec{q}}\cdot\Big[n\cchevrons{\vec{v}\frac{1}{2}\vec{\omega}\cdot\mathbb{I}\vec{\omega}}\Big]\nonumber\\
    &+\nabla_{\vec{q}}\cdot\Big[nm\mathbb{P}\vec{v}_0\Big]+\nabla_{\vec{q}}\cdot\Big[n\cchevrons{\vec{V}\frac{1}{2}(\vec{v}_0\cdot\vec{v}_0)}\Big]=0\nonumber.
\end{align}
First we notice that the last term vanishes because $\cchevrons{\vec{V}(\vec{v}_0\cdot\vec{v}_0)}=\cchevrons{\vec{V}}(\vec{v}_0\cdot\vec{v}_0)=0$. Next we develop a portion of the terms involved in the previous equation, i.e.~
\begin{align}
  &\partial_t \Big(n\frac{1}{2}m(\vec{v}_0\cdot\vec{v}_0)\Big)+\nabla_{\vec{q}}\cdot\Big[n\vec{v}_0\frac{1}{2}m(\vec{v}_0\cdot\vec{v}_0)\Big]\\
  &= nm(\partial_t\vec{v}_0)\cdot\vec{v}_0 + m(\partial_t n)\frac{\vec{v}_0\cdot\vec{v}_0}{2}+\nabla_{\vec{q}}\cdot\Big[n{\vec{v}_0\frac{1}{2}m(\vec{v}_0\cdot\vec{v}_0)}\Big]\nonumber\\
  &= nm(\partial_t\vec{v}_0)\cdot\vec{v}_0 -\nabla_{\vec{q}}\cdot\Big[nm\vec{v}_0\Big]\frac{\vec{v}_0\cdot\vec{v}_0}{2}+\nabla_{\vec{q}}\cdot\Big[n{\vec{v}_0\frac{1}{2}m(\vec{v}_0\cdot\vec{v}_0)}\Big]\nonumber\\
  &= nm(\partial_t\vec{v}_0)\cdot\vec{v}_0 + nm\Big[(\nabla_{\vec{q}}\vec{v}_0)\vec{v}_0\Big]\cdot\vec{v}_0 =-\vec{v}_0\cdot \Big(\nabla_{\vec{q}}\cdot(nm\mathbb{P})\Big),\nonumber 
\end{align}
where we have employed the chain rule, Leibniz rule, \eqref{eq:lawMassConservation}, and the scalar product of \eqref{eq:lawLinearMomentum} with $\vec{v}_0$.
Substituting the previous equation inside \eqref{eq:energyHydrodynamicEq} we obtain the following expression:
\begin{align}
  \partial_t \Big(n\cchevrons{&\frac{1}{2}m\vec{V}\cdot \vec{V}}\Big)\!+\partial_t\Big(n\cchevrons{\frac{1}{2}\vec{\omega}\cdot\mathbb{I}\vec{\omega}}\Big)-\vec{v}_0\!\cdot\!\Big(\nabla_{\vec{q}}\cdot(nm\mathbb{P})\Big)+\nabla_{\vec{q}}\cdot\Big[n\cchevrons{\frac{\vec{v}_0}{2}m(\vec{V}\cdot\vec{V})}\Big]\label{eq:energyHydrodynamicEqUnexpandedPressureTensor}\\
  &+\nabla_{\vec{q}}\cdot\Big[n\cchevrons{\vec{V}\frac{1}{2}m(\vec{V}\cdot\vec{V})}\Big]+\nabla_{\vec{q}}\cdot\Big[n\cchevrons{\vec{v}\frac{1}{2}\vec{\omega}\cdot\mathbb{I}\vec{\omega}}\Big]+\nabla_{\vec{q}}\cdot\Big[nm\mathbb{P}\vec{v}_0\Big]=0.\nonumber
\end{align}
Combining \eqref{eq:energyHydrodynamicEqUnexpandedPressureTensor} with the following well-known identity \cite[Result 2.11]{gonzalezStuart}
\begin{equation}
\nabla_{\vec{q}}\cdot (S^T\vec{v}_0)=(\nabla_{\vec{q}}\cdot S)\cdot\vec{v}_0+S:\nabla_{\vec{q}}\vec{v}_0,\label{eq:tensorDivergence}
\end{equation}
we obtain the following equation:
\begin{align}
  \partial_t \Big(n\cchevrons{\frac{1}{2}m\vec{V}\cdot \vec{V}}\Big)&+\partial_t\Big(n\cchevrons{\frac{1}{2}\vec{\omega}\cdot\mathbb{I}\vec{\omega}}\Big)+\Big(nm\mathbb{P}\Big):\nabla_{\vec{q}} \vec{v}_0 +\nabla_{\vec{q}}\cdot\Big[n\cchevrons{\frac{\vec{v}_0}{2}m(\vec{V}\cdot\vec{V})}\Big]\\
  &+\nabla_{\vec{q}}\cdot\Big[n\cchevrons{\vec{V}\frac{1}{2}m(\vec{V}\cdot\vec{V})}\Big]+\nabla_{\vec{q}}\cdot\Big[n\cchevrons{\vec{v}\frac{1}{2}\vec{\omega}\cdot\mathbb{I}\vec{\omega}}\Big]=0.\nonumber
\end{align}
We are now left to focus our attention on the terms involving the angular velocity. In particular we begin rewriting the previous equations in terms of the peculiar angular velocity, using the fact that $\cchevrons{\mathbb{I}\vec{\Omega}}$ vanishes:
\begin{align}
\partial_t &\Big(n\cchevrons{\frac{1}{2}m\vec{V}\cdot\vec{V}}\Big)+\partial_t\Big(n\cchevrons{\frac{1}{2}\vec{\omega}_0\cdot\mathbb{I}\vec{\omega}_0}\Big)+\partial_t\Big(n\cchevrons{\frac{1}{2}\vec{\Omega}\cdot\mathbb{I}\vec{\Omega}}\Big)+\Big(nm\mathbb{P}\Big):\nabla \vec{v}_0\label{eq:energyHydrodynamicEqUnexpandedPressureTensorWithPeculiarAngularVelocity}\\
  &+\nabla_{\vec{q}}\cdot\Big[n\cchevrons{\frac{\vec{v}_0}{2}m\abs{\vec{V}}^2}\Big]+\nabla_{\vec{q}}\cdot\Big[n\cchevrons{\frac{\vec{V}}{2}m\abs{\vec{V}}^2}\Big]\nonumber\\
  &+\nabla_{\vec{q}}\cdot\Big[n\cchevrons{\vec{v}_0\frac{1}{2}\vec{\omega}_0\cdot\mathbb{I}\vec{\omega}_0}\Big]+\nabla_{\vec{q}}\cdot\Big[n\cchevrons{\vec{v}_0\frac{1}{2}\vec{\Omega}\cdot\mathbb{I}\vec{\Omega}}\Big]\nonumber\\
  &+\nabla_{\vec{q}}\cdot\Big[n\cchevrons{\vec{V}\frac{1}{2}\vec{\Omega}\cdot\mathbb{I}\vec{\Omega}}\Big]+\nabla_{\vec{q}}\cdot\Big[n\cchevrons{\vec{V}\frac{1}{2}\vec{\omega}_0\cdot\mathbb{I}\vec{\omega}_0}\Big]+\nabla_{\vec{q}}\cdot\Big(n\mathbb{M}^T\vec{\omega}_0\Big)=0,\nonumber
\end{align}
where the second term in the last row vanishes as before because $\cchevrons{\vec{V}} = 0$.
Now we develop a portion of the terms involved in the previous equation, i.e.~
\begin{align}
  &\partial_t \Big(n \cchevrons{\frac{1}{2}\vec{\omega}_0\cdot\mathbb{I}\vec{\omega}_0}\Big)+\nabla_{\vec{q}}\cdot \Big[n\vec{v}_0\cchevrons{\frac{1}{2}\vec{\omega}_0\cdot\mathbb{I}\vec{\omega}_0}\Big]\label{eq:energyLawAngularMomentum}\\
  &=(\partial_t n)\frac{1}{2}\cchevrons{\vec{\omega}_0\cdot \mathbb{I}\vec{\omega}_0}+n\partial_t\Big(\cchevrons{\mathbb{I}\vec{\omega}_0}\Big)\cdot\vec{\omega}_0+\nabla_{\vec{q}}\cdot \Big[n\vec{v}_0\frac{1}{2}\vec{\omega}_0\cdot\cchevrons{\mathbb{I}\vec{\omega}_0}\Big]\nonumber\\
  &=-\frac{1}{2}\nabla_r\cdot(n\vec{v}_0)\cchevrons{\vec{\omega}_0\cdot \mathbb{I}\vec{\omega}_0}+n\partial_t\Big(\cchevrons{\mathbb{I}\vec{\omega}_0}\Big)\cdot\vec{\omega}_0+\nabla_{\vec{q}}\cdot \Big[n\vec{v}_0\frac{1}{2}\vec{\omega}_0\cdot\cchevrons{\mathbb{I}\vec{\omega}_0}\Big]\nonumber\\
  &= n\partial_t\Big(\cchevrons{\mathbb{I}\vec{\omega}_0}\Big)\cdot \vec{\omega}_0 + n\Big[\nabla_{\vec{q}}\big(\cchevrons{\mathbb{I}\vec{\omega}_0}\big)\vec{v}_0\Big]\cdot \vec{\omega}_0=-\vec{\omega}_0 \cdot\Big(\nabla_{\vec{q}}\cdot (n\mathbb{M})\Big)\nonumber\\
  &= \left(n\mathbb{M}\right):\nabla_{\vec{q}}\vec{\omega}_0 - \nabla_{\vec{q}}\cdot\Big(n\mathbb{M}^T\vec{\omega}_0\Big)\nonumber,
\end{align}
where we employ the chain rule, the Leibniz rule, \eqref{eq:conservationParticleNumber}, the scalar product of \eqref{eq:lawAngularMomentum} with $\vec{\omega}_0$, and \eqref{eq:tensorDivergence}.
We now substitute the previous equation into \eqref{eq:energyHydrodynamicEqUnexpandedPressureTensorWithPeculiarAngularVelocity} and obtain 
\begin{align}
\partial_t \Big(n\cchevrons{\frac{1}{2}m\vec{V}\cdot \vec{V}}\Big)&+\partial_t\Big(n\cchevrons{\frac{1}{2}\vec{\Omega}\cdot\mathbb{I}\vec{\Omega}}\Big)+\Big(nm\mathbb{P}\Big):\nabla \vec{v}_0\\
  &+\left(n\mathbb{M}\right):\nabla_{\vec{q}}\vec{\omega}_0\nonumber\\\
  &+\nabla_{\vec{q}}\cdot\Big[n\cchevrons{\frac{\vec{v}_0}{2}m\abs{\vec{V}}^2}\Big]+\nabla_{\vec{q}}\cdot\Big[n\cchevrons{\frac{\vec{V}}{2}m\abs{\vec{V}}^2}\Big]\nonumber\\
  &+\nabla_{\vec{q}}\cdot\Big[n\cchevrons{\vec{v}_0\frac{1}{2}\vec{\Omega}\cdot\mathbb{I}\vec{\Omega}}\Big]+\nabla_{\vec{q}}\cdot\Big[n\cchevrons{\vec{V}\frac{1}{2}\vec{\Omega}\cdot\mathbb{I}\vec{\Omega}}\Big]=0.\nonumber
\end{align}
Finally, we rewrite the previous equation using the heat flux and the kinetic energy $\theta$, respectively
\begin{equation}
  \vec{\mathfrak{Q}} = \frac{1}{2}\cchevrons{\vec{V}\Big(m\abs{\vec{V}}^2+\vec{\Omega}\cdot\mathbb{I}\vec{\Omega}\Big)}, \qquad \theta = \frac{1}{2}m\abs{\vec{V}}^2+\frac{1}{2}\vec{\Omega}\cdot \mathbb{I}\vec{\Omega}, 
\end{equation}
to obtain the following theorem of power expended:
\begin{align}
  \partial_t \Big(n \cchevrons{\theta}\Big)+\nabla_{\vec{q}}\cdot \Big[n\vec{v}_0\cchevrons{\theta}\Big]+(\rho\mathbb{P}):\nabla_{\vec{q}}\vec{v}_0&+(n\mathbb{M}):\nabla_{\vec{q}}\vec{\omega}_0+\nabla_{\vec{q}}\cdot \vec{\mathfrak{Q}}\label{eq:equationOfPowerExpandedWithEnergyFlux}= 0.
\end{align}

Using \eqref{eq:conservationParticleNumber} together with the Leibniz rule we can rewrite \eqref{eq:equationOfPowerExpandedWithEnergyFlux} as:
\begin{align}
  n\Big(\partial_t\cchevrons{\theta}\Big)+\nabla_{\vec{q}}\cdot\Big[n\vec{v}_0\cchevrons{\theta}\Big]-\nabla_{\vec{q}}\cdot \Big[n\vec{v}_0\Big]\cchevrons{\theta}&
  +(\rho\mathbb{P}):\nabla_{\vec{q}}\vec{v}_0 \\
  \nonumber &+(n\mathbb{M}):\nabla_{\vec{q}}\vec{\omega}_0 + \nabla_{\vec{q}}\cdot \vec{\mathfrak{Q}}= 0,
\end{align}
which can be rewritten by making use of the definition of the total derivative of an Eulerian field and the Leibniz rule as:
\begin{equation}
  \rho\dot{\psi_0}+(\rho m\mathbb{P}):\nabla_{\vec{q}}\vec{v}_0+(\rho\mathbb{M}):\nabla_{\vec{q}}\vec{\omega}_0+\nabla_{\vec{q}}\cdot\vec{\mathfrak{Q}}= 0,\label{eq:equationOfPowerExpanded}
\end{equation}
where $\psi_0(\vec{q},t)$ is the internal energy and is defined as $\psi_0(\vec{q},t)=\cchevrons{\theta}$.

A key observation of this work is the fact that \eqref{eq:equationOfPowerExpanded}, when integrated over a domain $\Omega\subset\mathbb{R}^3$, is similar to the rate of work hypothesis that forms the starting point of the modern development of Leslie--Ericksen theory for the dynamics of liquid crystals.
We will show in the next sections that \eqref{eq:equationOfPowerExpanded} relates to the rate of work hypothesis postulated in \cite{ericksen2} and is precisely the same rate of work hypothesis presented in \cite{leslie2}.
Last we would like to point out that \eqref{eq:equationOfPowerExpanded} is not only a reconciliation between the well-established continuum mechanics theory of nematic fluids and the kinetic theory approach here proposed but also a validation of the rate of work hypothesis postulated in \cite{leslie2}.
In fact rather than using the integral version of \eqref{eq:equationOfPowerExpanded} as an assumption, as in \cite{ericksen2} and \cite{leslie2}, the rate of work hypothesis has been derived starting from the foundations of Boltzmann--Curtiss kinetic theory.
\section{Maxwellian distribution}
In this section we derive constitutive relations for the stress tensor, the couple stress tensor, and the internal energy from the Maxwellian distribution for the Boltzmann--Curtiss equation~\eqref{eq:boltzman}, i.e.~that $f^{(0)}$ satisfying $C[f^{(0)},f^{0}]=0$. In seminal work by Curtiss~\cite{curtissI,curtissV}, this distribution was obtained using formal computations from statistical mechanics.
The Maxwellian is given by
\begin{equation}
  f^{(0)}(\vec{\alpha},\vec{V},\vec{\Omega})=\frac{nQ\sin(\alpha_2)}{\int_{\mathbb{R}^3}Q\sin(\alpha_2)\,d\alpha}\frac{m^{\frac{3}{2}}(I_1I_2I_3)^{\frac{1}{2}}}{(\frac{4}{\mathcal{N}}\pi\cchevrons{\theta})^3}\exp\Big[-m\frac{\abs{\vec{V}}^2}{\frac{4}{\mathcal{N}}\cchevrons{\theta}}-\frac{\vec{\Omega}\cdot\mathbb{I}\vec{\Omega}}{\frac{4}{\mathcal{N}}\cchevrons{\theta}}\Big]\label{eq:Maxwellian}
\end{equation}
where $Q=\exp(\frac{\vec{\omega}_0\cdot \mathbb{I}\vec{\omega}_0}{\frac{2}{3}\cchevrons{\theta}})$ and both $\omega_0$ and $\theta$ are assumed to remain constant at equilibrium.
Notice that the distribution here presented is the ``absolute Maxwellian'' since it is spatially homogeneous, i.e.~we are assuming that at equilibrium $n,\vec{V},\vec{\Omega},\theta$ are spatially independent quantities.

The only difference between the formulation of the Maxwellian considered here and the one presented in \cite{curtissI,curtissV} is that \eqref{eq:Maxwellian} is expressed in terms of internal energy $\cchevrons{\theta}$ rather than in terms of kinetic temperature.
If one is interested in obtaining \eqref{eq:Maxwellian} from \cite{curtissI,curtissV} one only has to apply the equipartition of energy theorem, i.e.~
\begin{equation}
  \cchevrons{\theta} = \frac{\mathcal{N}}{2} k_BT,\label{eq:equipartitionOfEnergy}
\end{equation}
where $T$ is the kinetic temperature, $\mathcal{N}$ is the number of degrees of freedom appearing in the Hamiltonian of a single molecule and $k_B$ is the Boltzmann constant.

What is the correct value of $\mathcal{N}$ for a polyatomic gas? To answer this question we consider linear diatomic gases, such as \ch{H2}, \ch{N2} and \ch{O2}.
We can think of these polyatomic gases as two spheres connected by a spring and therefore the Hamiltonian of this model would have seven degrees of freedom: the moments of the centers of mass of the two spheres, and the length of the spring connecting the two centers of mass.
Yet we know that vibration modes for the previously mentioned diatomic molecules are excited only above $2000K$, a temperature well above the regime of interest here.
This suggests that we can ignore the degrees of freedom corresponding to the spring connecting the center of mass of each atom and model the polyatomic gas molecule as a rigid body.
Lastly, in a later section, the hypothesis that our molecules are slender bodies will be essential to ignore some terms in the expansion of the inertia tensor.
In the slender body limit, we can regard our molecules as segments in space, thus losing one rotational degree of freedom in the kinetic energy; more details are provided in the supplementary material, \ref{sec:rationalMechanics}.
This suggests that in our regime of interest, we have $\mathcal{N}=5$ and the correct relation arising from the equipartition of energy theorem would be
\begin{equation}
  \label{eq:polyEquipartitionOfEnergy}
  \cchevrons{\theta} = \frac{5}{2} k_BT.
\end{equation}

We notice that the emergence of a nematic ordering implies that the remaining five rotational degrees of freedom are not independent among molecules and therefore the heat capacity of the nematic gas should be less than $\frac{5}{2}R$ per mole, where $R$ denotes the gas constant.
These difficulties in understanding the correct way to apply the equipartition of energy theorem are reflected in the experimental observation of the heat capacity of nematic fluids such as liquid crystals \cite{heatCapacity}.
In particular, in \cite{heatCapacity} the author observes a sharp increase in the specific heat capacity of the liquid crystal MBBA at constant pressure when the temperature is increased from $319K$ to $324K$.
Such behaviour is anomalous compared to what is observed for polyatomic gases such as \ch{N2}, for which the specific heat capacity at constant pressure is constant between $123K$ and $573K$ \cite{chase}.
For this reason, we decided to treat non-spherical rarefied gases in the presence of nematic alignment in a standard manner, i.e.~assuming \eqref{eq:polyEquipartitionOfEnergy}, and devote future work to this particular issue.

We now proceed to use the Maxwellian distribution \eqref{eq:Maxwellian} to obtain constitutive relations for the stress tensor, the couple stress tensor, and the internal energy.
We can compute the pressure tensor near equilibrium by evaluating $\cchevrons{\vec{V}\otimes\vec{V}}$ using the Maxwellian $f^{(0)}$ as a probability distribution. 
It turns out that the pressure tensor near equilibrium is nothing more than the variance of the Maxwellian distribution, i.e.~
\begin{equation}
  \mathbb{P}^{(0)} = \cchevrons{\vec{V}\otimes\vec{V}}_{\sim f^{(0)}}=\frac{2(I_1I_2I_3)^{\frac{1}{2}}}{5m}\cchevrons{\theta}I.
\end{equation}

Following the same reasoning we can compute also the couple stress tensor, which turns out to be zero:
\begin{equation}
  \mathbb{M}^{(0)} = \cchevrons{\vec{V}(\mathbb{I}\vec{\omega})}_{\sim f^{(0)}}=\mathbb{E}_{f^{(0)}_{\vec{\Omega}}}\big[\mathbb{I}\vec{\omega}\big]\mathbb{E}_{f^{(0)}_{\vec{V}}}\big[\vec{V}\big]= 0.
\end{equation}
This follows from the Fubini--Tonelli Theorem, the fact that $f^{(0)}=f^{(0)}_{\vec{V}}f^{(0)}_{\vec{\Omega}}f^{(0)}_{\vec{\alpha}}$, and that $f^{(0)}_{\vec{V}}$ has zero mean.
In particular, since we know $\vec{\xi}$ is also zero, we can use \eqref{eq:lawAngularMomentum} to deduce that $\dot{\vec{\eta}}=0$.
In what follows, we will be interested in weighting the pressure tensor and the couple stress tensor by the fluid density, and for this reason, we introduce the kinetic pressure 
\begin{equation}
  p_K\coloneqq tr[\rho\mathbb{P}^{(0)}]=\frac{6}{5}\frac{\rho}{m}(I_1I_2I_3)^{\frac{1}{2}}\cchevrons{\theta}.\label{eq:kineticPressure}
\end{equation}

\section{Oseen--Frank energy}
\label{sec:OseenFrankEnergy}
Since we now know that energy stored in the particle ensemble is given by $\psi=\cchevrons{\frac{1}{2}m\abs{\vec{v}}^2+\frac{1}{2}\vec{\omega}\cdot\mathbb{I}\vec{\omega}}$ we can proceed by rewriting,
\begin{align}
  \psi = \frac{1}{2}\cchevrons{m\abs{\vec{v}}^2+\vec{\omega}\cdot\mathbb{I}\vec{\omega}} &= \frac{m}{2}\cchevrons{\abs{\vec{V}}^2}+\frac{m}{2}\abs{\vec{v}_0}^2+\frac{1}{2}\cchevrons{\vec{\Omega}\cdot\mathbb{I}\vec{\Omega}}+\frac{1}{2}\cchevrons{\vec{\omega}\cdot \mathbb{I}}\cdot\vec{\omega}_0\nonumber\\
  &=\frac{1}{2}\Big[\cchevrons{m\abs{\vec{V}}^2}+\cchevrons{\vec{\Omega}\cdot\mathbb{I}\vec{\Omega}}\Big]+\frac{1}{2}\Big[m\abs{\vec{v}_0}^2+\vec{\omega_0}\cdot\overline{\mathbb{I}}\vec{\omega}_0\Big]\nonumber\\
  &=:\psi_0+\psi_K,\label{eq:energyDecomposition}
\end{align}
where we make use of the fact that both $\cchevrons{V}$ and $\cchevrons{\mathbb{I}\vec{\Omega}}$ are zero, and that $\vec{\omega}_0=\cchevrons{\mathbb{I}}^{^{-1}}\vec{\eta}$.
The first term in \eqref{eq:energyDecomposition} is the internal energy $\cchevrons{\theta}$, from now on denoted $\psi_0$ to avoid using the $\cchevrons{\cdot}$ notation.
The second term is the kinetic energy. It may be counterintuitive to identify $\psi_0(\vec{q}, t)$ as the internal energy, but from the decomposition \eqref{eq:energyDecomposition} we can obtain:
\begin{equation}
  \psi_0 = \psi-\psi_K
\end{equation}
which can be interpreted as the well-known decomposition of internal energy into total energy minus macroscopic kinetic energy. 

We will next derive a slightly different decomposition in order to relate some terms of $\psi(\vec{q},t)$ to the classical Oseen--Frank energy functional~\cite{frank,oseen}.
It is a well-known fact that the inertia tensor of a needle-shaped molecule, i.e.~the vanishing-girth limit of a calamitic molecule, can be decomposed as:
\begin{equation}
\mathbb{I}=\lambda_1 \Big(I-\vec{\nu}\otimes\vec{\nu}^T\Big)\label{eq:InertiaTensorNeedle}
\end{equation}
with $\vec{\nu}\in \mathbb{S}^2$.
Motivated by \eqref{eq:InertiaTensorNeedle} we assume the following decomposition holds for calamitic molecules:
\begin{equation}
  \mathbb{I}=\lambda_1 \Big(I-\vec{\nu}\otimes\vec{\nu}\Big)+\mathcal{O}(\varepsilon),\label{eq:TensorDecomposition}
\end{equation}
where now $\vec{\nu}(\vec{q},t)\in \mathbb{S}^2$ is the nematic vector, first introduced in \cite{ericksen2}, and $\varepsilon \approx \left(\delta/\ell\right)^2$, where $\delta$ and $\ell$ are respectively the diameter of the spherical hemisphere at the end of the calamitic molecule and the height of the calamitic molecule.
Notice that $\vec{\nu}$ is normally defined in terms of the orientation distribution function (ODF), i.e.~$\vec{\nu}=\cchevrons{\vec{\mathcal{N}}(\vec{\alpha})}$, where $\vec{\mathcal{N}}(\vec{\alpha})$ is the unit vector corresponding to the Euler angles $\vec{\alpha}$.

The introduction of a nematic director allows for substantial simplifications of computations involving the average of the inertia tensor, since under the hypothesis of the emergence of a nematic ordering we are allowed to move the inertia tensor expressed as in \eqref{eq:TensorDecomposition} outside the $\cchevrons{\cdot}$.
Since by theorem \ref{thm:totalDerivativeDirector} the material time derivative $\dot{\vec{\nu}}$ can be computed as $\vec{\omega}\times\vec{\nu}=\dot{\vec{\nu}}$, we can obtain:
\begin{align}
\cchevrons{\vec{\omega}_0\cdot \mathbb{I}\vec{\omega}_0}&=\lambda_1\Big[\vec{\omega}_0\cdot\vec{\omega}_0-\cchevrons{(\vec{\omega}_0\cdot\vec{\nu})(\vec{\omega}_0\cdot\vec{\nu})}\Big]+\mathcal{O}(\varepsilon)\\
&=\lambda_1\dot{\vec{\nu}}\cdot \dot{\vec{\nu}}{\,+\,\overset{\circ}{\vec{\nu}}\cdot \overset{\circ}{\vec{\nu}}-2(\dot{\vec{\nu}}\cdot \overset{\circ}{\vec{\nu}})}+\mathcal{O}(\varepsilon),\label{eq:EricksenEnergy}
\end{align}
using the triple product identity {$(\vec{\omega}_0\times\vec{\nu})\cdot(\vec{\omega}_0\times\vec{\nu})=(\vec{\omega}_0\cdot\vec{\omega}_0)(\vec{\nu}\cdot\vec{\nu})-(\vec{\omega}_0\cdot\vec{\nu})(\vec{\nu}\cdot\vec{\omega}_0)$, together with the definition of the corotational time derivative of the director, i.e.~$\overset{\circ}{\vec{\nu}}\coloneqq \dot{\vec{\nu}}-\vec{\omega}_0 \times \vec{\nu}$}.
This simplifies the kinetic energy
\begin{equation}
\psi_K=\frac{1}{2}m(\vec{v}_0\cdot\vec{v}_0)+\frac{\lambda_1}{2}\Big(\dot{\vec{\nu}}\cdot \dot{\vec{\nu}}+{\overset{\circ}{\vec{\nu}}\cdot \overset{\circ}{\vec{\nu}} - 2(\dot{\vec{\nu}}\cdot \overset{\circ}{\vec{\nu}})}\Big)+\mathcal{O}(\varepsilon)\label{eq:EricksenKineticEnergy}
\end{equation}
which in the vanishing girth limit, i.e.~$\varepsilon \searrow 0$, becomes $\psi_K(\vec{q},t)=\frac{1}{2}m(\vec{v}_0\cdot\vec{v}_0)+\frac{\lambda_1}{2}(\dot{\vec{\nu}}\cdot \dot{\vec{\nu}}+\overset{\circ}{\vec{\nu}}\cdot \overset{\circ}{\vec{\nu}}-2\dot{\vec{\nu}}\cdot\overset{\circ}{\vec{\nu}})$.
We proceed with our analysis of $\psi_0(\vec{q},t)$, making use of the fact that by theorem \ref{thm:totalDerivativeDirector} $\vec{\omega}\times \vec{\nu}=\partial_t \vec{\nu}+(\nabla_{\vec{q}}\vec{\nu})\vec{v}$, hence we can expand the $\cchevrons{\vec{\omega}\cdot\mathbb{I}\vec{\omega}}$ in $\psi$ as
\begin{align}
  2\lambda_1^{-1}\cchevrons{\vec{\omega}\cdot \mathbb{I}\vec{\omega}}&=\color{orange}\cchevrons{(\vec{\omega}\times\vec{\nu})\cdot (\vec{\omega}\times \vec{\nu})}=\cchevrons{\Big(\partial_t \vec{\nu}+(\nabla_{\vec{q}} \vec{\nu}) \vec{v}\Big)^T\Big(\partial_t \vec{\nu}+(\nabla_{\vec{q}} \vec{\nu})\vec{v}\Big)}\nonumber \\
  &=\cchevrons{\abs{\partial_t \vec{\nu}}}^2+2\cchevrons{(\partial_t \vec{\nu})^T\Big((\nabla_{\vec{q}}\vec{\nu})\vec{v}\Big)}+\tr\cchevrons{\nabla_{\vec{q}}{\vec{\nu}}(\vec{v}\otimes\vec{v})\nabla_{\vec{q}}\vec{\nu}^T},\nonumber
\end{align}
where the last term has been obtained by $tr(\vec{a}\vec{b}^T)=\vec{a}^T\vec{b}$.
Furthermore, we notice that in the mixed-term
\begin{equation}
\cchevrons{2(\partial_t \vec{\nu})^T\Big((\nabla_{\vec{q}}\vec{\nu})\vec{v}\Big)}=-2\cchevrons{\abs{\partial_t \vec{\nu}}^2}+\cchevrons{2\partial_t{\vec{\nu}}\cdot \dot{\vec{\nu}}},
\end{equation}
the last term vanishes because $\partial_t{\vec{\nu}}\cdot \dot{\vec{\nu}}=\partial_t\vec{\nu}\cdot (\vec{\omega}\times \vec{\nu})=\vec{\omega}\cdot (\partial_t \vec{\nu} \times \vec{\nu})=0$, as $\vec{\nu}\times \vec{\nu} = 0$.
We are now left with the following expression for the internal energy:
\begin{align}
  \psi_0& = \psi-\psi_K= \psi-\frac{1}{2}m(\vec{v}_0\cdot\vec{v}_0)-\frac{\lambda_1}{2}(\dot{\vec{\nu}}\cdot\dot{\vec{\nu}})+{\frac{\lambda_1}{2}(\overset{\circ}{\vec{\nu}}\cdot \overset{\circ}{\vec{\nu}}) - \lambda(\dot{\vec{\nu}}\cdot\overset{\circ}{\vec{\nu}})}\nonumber\\
  & =\frac{1}{2}m\cchevrons{\abs{\vec{v}}^2}+\frac{\lambda_1}{2}\tr\cchevrons{\nabla_{\vec{q}}{\vec{\nu}}(\vec{v}\otimes\vec{v})\nabla_{\vec{q}}\vec{\nu}^T}-\frac{1}{2}m(\vec{v}_0\cdot\vec{v}_0)\nonumber\\&-\frac{\lambda_1}{2}(\dot{\vec{\nu}}\cdot\dot{\vec{\nu}})-{\frac{\lambda_1}{2}}\cchevrons{(\partial_t \vec{\nu}\cdot \partial_t \vec{\nu})}+{\frac{\lambda_1}{2}(\overset{\circ}{\vec{\nu}}\cdot \overset{\circ}{\vec{\nu}})-\lambda(\dot{\vec{\nu}}\cdot\overset{\circ}{\vec{\nu}})}\nonumber\\
  & =\frac{1}{2}\cchevrons{\abs{\vec{v}^2}}+\frac{\lambda_1}{2}\tr\Big[\nabla_{\vec{q}}\vec{\nu}\cchevrons{\Pi}\nabla_{\vec{q}}\vec{\nu}^T\Big]-\frac{1}{2}m(\vec{v}_0\cdot\vec{v}_0)\nonumber\\&-\frac{\lambda_1}{2}(\dot{\vec{\nu}}\cdot\dot{\vec{\nu}})-{\frac{\lambda_1}{2}}\cchevrons{(\partial_t \vec{\nu}\cdot \partial_t \vec{\nu})}+{\frac{\lambda_1}{2}(\overset{\circ}{\vec{\nu}}\cdot \overset{\circ}{\vec{\nu}})-\lambda(\dot{\vec{\nu}}\cdot\overset{\circ}{\vec{\nu}})}\nonumber\\
  & =\frac{1}{2}\cchevrons{\abs{\vec{v}^2}}+\frac{\lambda_1}{2}\tr\Big[\nabla_{\vec{q}}\vec{\nu}\mathbb{P}\nabla_{\vec{q}}\vec{\nu}^T\Big]-\frac{1}{2}m(\vec{v}_0\cdot\vec{v}_0)+{\frac{\lambda_1}{2}(\overset{\circ}{\vec{\nu}}\cdot \overset{\circ}{\vec{\nu}})-\lambda(\dot{\vec{\nu}}\cdot\overset{\circ}{\vec{\nu}})}.
\end{align} 

To conclude this section we would like to highlight one term in the previous decomposition of $\psi_0$, i.e~\eqref{eq:energyDecomposition}, which we denote as
\begin{equation}
  \psi_{OF}(\vec{q},t)=\frac{\lambda_1}{2}\tr\Big[\nabla_{\vec{q}}\vec{\nu}\mathbb{P}\nabla_{\vec{q}}\vec{\nu}^T\Big].
\end{equation}
Using the Maxwellian to compute $\mathbb{P}$, we have $\mathbb{P}^{(0)}=\frac{2}{5m}(I_1I_2I_3)^{\frac{1}{2}}\cchevrons{\theta}I$.
Since multiples of the identity matrix commute with all other matrices, we get the one-constant approximation of the Oseen--Frank energy functional, with the caveat that the Frank constant in this case depends on the pressure:
\begin{equation}
  \label{eq:OseenFrankEnergy}
  \rho\psi_{OF}^{(0)}(\vec{q},t)=p_K\frac{\lambda_1}{2}tr\Big[\nabla_{\vec{q}}\vec{\nu}\nabla_{\vec{q}}\vec{\nu}^T\Big].
\end{equation}
The dependence of the energy functional on the pressure is a striking difference from the classical Oseen--Frank theory.
Such a dependence is a consequence of the fact that the principle of virtual work we started from and the constitutive relation for the pressure tensor we used were derived from the Boltzmann--Curtiss equation, which describes the dynamics of compressible fluids.

\section{Noll--Coleman procedure}
\label{sec:NollColeman}
We observe that \eqref{eq:equationOfPowerExpanded} holds for all thermokinetic processes to which the thermodynamic body representing our fluid can be subjected, in the sense of~\cite{truesdell2}.
We can therefore perform a Noll--Coleman procedure to derive from $\psi^{(0)}_{OF}$ the nematic contribution to the stress tensor and the couple stress tensor.
The Noll--Coleman procedure is a method to derive the constitutive equations for the moments from a principle of virtual work \cite{nollColeman,stewart}.

First, we recall \eqref{eq:equationOfPowerExpanded}, where we have removed the dependence on the density:
\begin{align}
  \dot{\psi}_0+m\mathbb{P}:\nabla_{\vec{q}}\vec{v}_0+\mathbb{M}:\nabla_{\vec{q}}\vec{\omega}_0+\nabla_{\vec{q}}\cdot\vec{\mathfrak{Q}}= 0.\label{eq:equationOfPowerExpandedWithOutDivergence}
\end{align}

We are interested in applying the Noll--Coleman procedure only to find the relation between the pressure tensor and the nematic director, so we fix $\psi_0 =\psi_{OF}$. To begin with, we expand the total derivative of $\psi_0$ in terms of $\dot{\vec{\nu}}$ and $\dot{\nabla_{\vec{q}}\vec{\nu}}$ and then adopt tensor notation. We then proceed using the identity $\dot{\vec{\nu}}=\vec{\omega}_0\times\vec{\nu}$ and definition of material time derivative to obtain $\dot{\nabla_{\vec{q}}\vec{\nu}}=\nabla_{\vec{q}}\dot{\vec{\nu}}-\nabla_{\vec{q}}\vec{\nu}\nabla_{\vec{q}}\vec{v}_0$, which we substitute into the expansion of $\dot{\psi}_0$:
\begin{align}
  \dot{\psi_0}&=\frac{\partial\psi_0}{\partial\vec{\nu}}\dot{\vec{\nu}}+\frac{\partial\psi_0}{\partial(\nabla_{\vec{q}}\vec{\nu})}\dot{\nabla_{\vec{q}}\vec{\nu}}=\frac{\partial\psi_0}{\partial(\nu_p)}\Big[\varepsilon_{iqp}\nu_q\omega_{0,i}\Big]+\frac{\partial\psi_0}{\partial(\partial_k\nu_p)}\partial_k\dot{\nu}_p\nonumber\\
  &=\frac{\partial\psi_0}{\partial(\nu_p)}\Big[\varepsilon_{iqp}\nu_q\omega_{0,i}\Big]+\frac{\partial\psi_0}{\partial(\partial_k\nu_p)}\Big[\varepsilon_{iqp}\partial_{k}(\nu_q\omega_{0,i})-\partial_q(\nu_p)\partial_k(v_{0,q})\Big]\nonumber\\
  &=\frac{\partial\psi_0}{\partial(\nu_p)}\Big[\varepsilon_{iqp}\nu_q\omega_{0,i}\Big]+\frac{\partial\psi_0}{\partial(\partial_k\nu_p)}\Big[\varepsilon_{iqp}\partial_k(\nu_q)\omega_{0,i}+\varepsilon_{iqp}\partial_k(\omega_{0,i})\nu_q-\partial_q(\nu_p)\partial_k(v_{0,q})\Big]\nonumber\\
  &=\varepsilon_{iqp}\Bigg[\Big(\nu_q \frac{\partial\psi_0}{\partial(\nu_p)}+\partial_k\nu_q\frac{\partial\psi_0}{\partial(\partial_k\nu_p)}\Big)\omega_{0,i}+\nu_q \frac{\partial\psi_0}{\partial(\partial_k\nu_p)}\partial_k\omega_{0,i}\Bigg]-\frac{\partial\psi}{\partial(\partial_k\nu_p)}\partial_q\nu_p\partial_k(v_{0,q})\label{eq:preEricksenInequality}.
\end{align}
We would like to express the previous equation using a vector product. To do this we use the Ericksen identity stated in the following lemma~\cite[Appendix B]{stewart}.
\begin{lemma}
  \label{lem:EricksenIdentity}
  Given a rotationally invariant $\psi_0=\psi_0(\vec{\nu},\nabla_{\vec{q}}\vec{\nu})$, so that $\psi_0(\vec{\nu},\nabla_{\vec{q}}\vec{\nu})\allowbreak =\psi_0(Q\vec{\nu},Q\nabla_{\vec{q}}Q^T)$ for all $Q\in SO(\mathbb{R}^3)$,
  the following identity holds:
  \begin{equation}
    \varepsilon_{iqp}\Big[\nu_q\frac{\partial\psi_0}{\partial\nu_p}+\partial_k\nu_q\frac{\partial\psi_0}{\partial(\partial_k\nu_p)}+\partial_q\nu_k\frac{\partial\psi_0}{\partial(\partial_p\nu_k)}\Big]=0.\label{eq:EricksenIdentity}
  \end{equation}
\end{lemma}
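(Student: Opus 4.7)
The plan is to obtain the Ericksen identity as the infinitesimal form of the rotational invariance of $\psi_0$. The idea is to consider a one-parameter family of rotations $Q(\epsilon) = I + \epsilon A + O(\epsilon^2) \in SO(\mathbb{R}^3)$ generated by an arbitrary skew-symmetric matrix $A$ (so $A^T = -A$), substitute $(Q(\epsilon)\vec{\nu}, Q(\epsilon)\nabla_{\vec{q}}\vec{\nu}\, Q(\epsilon)^T)$ into the invariance relation, and differentiate at $\epsilon=0$.

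First, I would compute the first-order variations: the perturbation of the argument $\vec{\nu}$ is $A_{pq}\nu_q$, while the perturbation of $(\nabla_{\vec{q}}\vec{\nu})_{pk} = \partial_k \nu_p$ is, by the product rule on $Q\nabla_{\vec{q}}\vec{\nu}\, Q^T$, equal to $A_{pi}\partial_k\nu_i + \partial_l\nu_p A_{kl}$. Differentiating $\psi_0(Q(\epsilon)\vec{\nu}, Q(\epsilon)\nabla_{\vec{q}}\vec{\nu}\, Q(\epsilon)^T)$ in $\epsilon$ at $\epsilon=0$ and using the chain rule gives
\begin{equation*}
0 = \frac{\partial\psi_0}{\partial\nu_p}A_{pq}\nu_q + \frac{\partial\psi_0}{\partial(\partial_k\nu_p)}\bigl[A_{pi}\partial_k\nu_i + \partial_l\nu_p A_{kl}\bigr].
\end{equation*}
After relabeling dummy indices in the last summand so that $A$ carries the indices $(p,q)$ in every term, this rewrites as
\begin{equation*}
0 = A_{pq}\Bigl[\nu_q\frac{\partial\psi_0}{\partial\nu_p} + \partial_k\nu_q\frac{\partial\psi_0}{\partial(\partial_k\nu_p)} + \partial_q\nu_k\frac{\partial\psi_0}{\partial(\partial_p\nu_k)}\Bigr].
\end{equation*}

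Finally, since $A$ is an arbitrary skew-symmetric matrix, I parametrize it as $A_{pq} = \varepsilon_{pqi}a_i$ with $\vec{a}\in\mathbb{R}^3$ arbitrary, whence $\vec{a}$ can be peeled off to yield the identity with $\varepsilon_{pqi}$ (equivalently $\varepsilon_{iqp}$ up to the overall sign), which is exactly \eqref{eq:EricksenIdentity}. Equivalently, the bracketed tensor must be symmetric in $(p,q)$, which is the same statement.

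The main obstacle here is purely bookkeeping: one has to relabel indices carefully in the term coming from the right multiplication by $Q^T$ so that the skew-symmetric generator $A$ factors out cleanly with the right index pair, and then one must be careful about the sign convention when converting from a skew-symmetric matrix to a vector via $\varepsilon$. No deep argument is needed beyond differentiating the defining invariance of $\psi_0$ along an arbitrary element of the Lie algebra $\mathfrak{so}(3)$.
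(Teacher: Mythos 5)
Your argument is correct and is exactly the standard derivation: differentiate the frame-indifference condition along a one-parameter subgroup $Q(\epsilon)=I+\epsilon A+O(\epsilon^2)$ of $SO(3)$ with $A\in\mathfrak{so}(3)$ arbitrary, relabel indices so the generator factors out, and peel off the axial vector of $A$ via $A_{pq}=\varepsilon_{pqi}a_i$. The paper does not give its own proof of this lemma but cites Stewart's Appendix B, and your reasoning reproduces that textbook argument (the residual sign discrepancy between $\varepsilon_{pqi}$ and $\varepsilon_{iqp}$ that you flag is immaterial since the identity equates to zero).
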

Using \eqref{eq:EricksenIdentity} to substitute for the coefficients of $\omega_{0,i}$ in \eqref{eq:preEricksenInequality} yields
\begin{equation}
  \dot{\psi_0}=\varepsilon_{iqp}\Big(\nu_q \frac{\partial\psi_0}{\partial(\partial_k\nu_p)}\partial_k\omega_{0,i}-\partial_q\nu_k\frac{\partial\psi_0}{\partial(\partial_p\nu_k)}\omega_{0,i}\Big)-\frac{\partial\psi_0}{\partial(\partial_k\nu_p)}\partial_q\nu_p\partial_k(v_{0,q})\label{eq:postEricksenInequality}.
\end{equation}
Substituting \eqref{eq:postEricksenInequality} into \eqref{eq:equationOfPowerExpandedWithOutDivergence} and overloading the cross product between a vector and a tensor as $\vec{\nu}\times S = \varepsilon_{iqp}\nu_qS_{jp}$ we get, for a thermokinetic process with $\vec{\omega}_{0} = 0$, the following identity
\begin{align}
  \Big(\mathbb{P}-(\frac{\partial\psi_0}{\partial(\nabla\vec{\nu})})^T\nabla_{\vec{q}}\vec{\nu}\Big):\nabla_{\vec{q}}\vec{v}_0+\Big(\mathbb{M}+\vec{\nu}\times(\frac{\partial\psi_0}{\partial\nabla\vec{\nu}})\Big):\nabla_{\vec{q}}\vec{\omega}_0+\nabla_{\vec{q}}\cdot \vec{\mathfrak{Q}}= 0.
\end{align}
{
  Under the assumption that we are sufficiently close to the thermodynamic equilibrium, we can compute the heat flux $\nabla_{\vec{q}}\cdot\vec{\mathfrak{Q}}$ using the Maxwellian distribution \eqref{eq:Maxwellian}, i.e.~ $\nabla_{\vec{q}}\cdot\vec{\mathfrak{Q}^{(0)}}=0$.
  Hence, near thermodynamic equilibrium the previous equation can be approximated as
  \begin{align}
    \Big(\mathbb{P}-(\frac{\partial\psi_0}{\partial(\nabla\vec{\nu})})^T\nabla_{\vec{q}}\vec{\nu}\Big):\nabla_{\vec{q}}\vec{v}_0+\Big(\mathbb{M}+\vec{\nu}\times(\frac{\partial\psi_0}{\partial\nabla\vec{\nu}})\Big):\nabla_{\vec{q}}\vec{\omega}_0= 0\label{eq:equationOfPowerExpandedMaxwellian}.
  \end{align}
Furthermore, since we can choose $\nabla_{\vec{q}}\vec{v}_0$ and $\nabla_{\vec{q}}{\vec{\omega}_0}$ arbitrarily to preserve the equality in \eqref{eq:equationOfPowerExpandedMaxwellian}, the following identities must always hold:
\begin{equation}
  \mathbb{P}^{(*)}=\left(\frac{\partial\psi_{OF}}{\partial(\nabla\vec{\nu})}\right)^T\nabla_{\vec{q}}\vec{\nu}, \qquad \mathbb{M}^{(*)}=-\vec{\nu}\times\left(\frac{\partial\psi_{OF}}{\partial\nabla\vec{\nu}}\right),\label{eq:constitutiveNollColemann}
\end{equation}
}
where the superscript $^{(*)}$ denotes the additional component that arises from the nematic ordering, via the Noll--Coleman procedure.
This step is crucial: it yields a closure that describes the coupling between the pressure stress tensor and $\nabla\vec{\nu}$.
Therefore, in contrast to the hypothesis of Curtis and Dahler \cite{curtissI,curtissV}, the intrinsic angular momentum is coupled to the macroscopic velocity.

Furthermore, \eqref{eq:lawAngularMomentum} can be used to describe the evolution of the nematic director.
We proceed to substitute \eqref{eq:constitutiveNollColemann} inside \eqref{eq:continuumMechanics} to obtain the following set of equations, governing the dynamics of a polyatomic rarefied gas:
\begin{subequations}
  \begin{gather}
    \partial_t \rho + \nabla_{\vec{q}}\cdot (\rho\vec{v}_0) =0,\label{eq:continuityEquationEta}\\
    \rho\Big[\partial_t \vec{v}_0 + (\nabla_{\vec{q}}\vec{v}_0)\vec{v}_0\Big] + \nabla_{\vec{q}}\cdot\left(\rho\,\mathbb{P}+\rho\left(\frac{\partial\psi_{OF}}{\partial\nabla\vec{\nu}}\right)^T\nabla_{\vec{q}}\vec{\nu}\right)=0,\label{eq:lawLinearMomentumEta}\\
    \rho\Big[\partial_t \vec{\eta} + (\nabla_{\vec{q}}\vec{\eta})\vec{v}_0\Big] -\nabla_{\vec{q}}\cdot\left(\rho\,\vec{\nu}\times\left(\frac{\partial\psi_{OF}}{\partial\nabla\vec{\nu}}\right)\right) = 0.\label{eq:lawAngularMomentumEta}
  \end{gather}
\end{subequations}
We focus our attention on the equation for the conservation of angular momentum \eqref{eq:lawAngularMomentumEta} and notice that we have two variables $\vec{\nu}$ and $\vec{\eta}$ but only one equation.
Ideally, we would like to make one of the two disappear or add an equation relating the two quantities.
In deriving the Leslie--Ericksen equations (see for example~\cite{stewart}), Leslie wrote~\cite{leslie2}:
\begin{quote}
  The inertial term associated with local rotation of the material element is omitted because in general, it is negligible.
\end{quote}
Given the symmetric nature of the stress tensor, in our model we can not ignore the intrinsic angular momentum $\vec{\eta}$, since it is the only source of anisotropy. In order to express the intrinsic angular momentum $\vec{\eta}$ in terms of the nematic director $\vec{\nu}$ we observe that if as in \eqref{eq:TensorDecomposition} we assume $\varepsilon\searrow 0$ and use the two identities
\begin{equation}
\dot{\vec{\nu}}=\vec{\omega}\times\vec{\nu}
\end{equation}
and
\begin{equation}
\dot{\vec{\nu}}\times\vec{\nu} = (\vec{\omega}\times \vec{\nu})\times \vec{\nu} = \Big(\vec{\omega}(\vec{\nu}^T\vec{\nu})-\vec{\nu}(\vec{\nu}^T\vec{\omega})\Big),
\end{equation}
we have:
\begin{equation}
    \vec{\eta} = \cchevrons{\mathbb{I}\vec{\omega}} =\cchevrons{\lambda_1\Big(I-\vec{\nu}\otimes\vec{\nu}\Big)\vec{\omega}}=\cchevrons{\lambda_1\Big(\vec{\omega}(\vec{\nu}^T\vec{\nu})-\vec{\nu}(\vec{\nu}^T\vec{\omega})\Big)}=\lambda_1\dot{\vec{\nu}}\times \vec{\nu}.
\end{equation}
Substituting this last equation into \eqref{eq:lawAngularMomentumEta} we obtain
\begin{equation}
    \rho{\ddot{\vec{\nu}}}\times\vec{\nu}-\nabla_{\vec{q}}\cdot\left(\rho\,\vec{\nu}\times\left(\frac{\partial\psi_{OF}}{\partial\nabla\vec{\nu}}\right)\right)  =  0.\nonumber
\end{equation}
To massage the previous equation to a better-known form we subtract from the previous equation the antisymmetric part of $\rho(\frac{\partial\psi_{OF}}{\partial(\nabla\vec{\nu})})^T\nabla_{\vec{q}}\vec{\nu}$, which we know will vanish since $\mathbb{P}$ is symmetric by definition.
Using the Ericksen identity once again yields, in tensor notation,
\begin{gather}
  \varepsilon_{ikq}\Big[\lambda_1{\ddot{\nu}_q}\nu_k+\nu_k\partial_j \Big(\rho\frac{\partial \psi_{OF}}{\partial(\partial_j\nu_q)}\Big)-\rho\frac{\partial\psi_{OF}}{\partial(\partial_k\nu_p)}\partial_q\nu_p\Big]\nonumber\\
  =\varepsilon_{ikq}\nu_k\Big[\lambda_1{\ddot{\nu}_q}+\partial_j \Big(\rho\frac{\partial \psi_{OF}}{\partial(\partial_j\nu_q)}\Big)-\rho\frac{\partial\psi_{OF}}{\partial \nu_q}\Big]  =  0.
\end{gather}
We conclude by observing that the last equation prescribes that all the components of
\begin{equation}
\lambda_1\rho{\ddot{\vec{\nu}}}+\nabla_{\vec{q}}\cdot \rho\frac{\partial\psi_{OF}}{\partial\nabla\vec{\nu}}-\rho\frac{\partial\psi_{OF}}{\partial\vec{\nu}},
\end{equation}
in the direction orthogonal to $\vec{\nu}$ vanish. We can express this constraint using the Lagrange multiplier $\tau(\vec{q}):\mathbb{R}^3\to\mathbb{R}$, i.e.~
\begin{equation}
  \lambda_1\rho{\ddot{\vec{\nu}}}+\nabla_{\vec{q}}\cdot \rho\frac{\partial\psi_{OF}}{\partial\nabla\vec{\nu}}-\rho\frac{\partial\psi_{OF}}{\partial\vec{\nu}}=\tau\vec{\nu},
\end{equation}
where the Lagrange multiplier $\tau(\vec{q})$ is determined by the unit length constraint on $\vec{\nu}$.
We have retrieved the balance law of angular momentum presented in \cite{leslie2}.

Next we evaluate the expression for $\frac{\partial \psi_{OF}}{\partial\nabla\vec{\nu}}$ using the Maxwellian Oseen--Frank energy $\psi_{OF}^{(0)}$, using the Maxwellian pressure tensor $\mathbb{P}^{(0)}$ to obtain the following set of equations:
\begin{subequations}
  \label{eq:LeslieEricksenWithPressure}
  \begin{align}
    \partial_t \rho + \nabla_{\vec{q}}\cdot (\rho\vec{v}_0) &=0,\\
    \rho\Big[\partial_t \vec{v}_0 + (\nabla_{\vec{q}}\vec{v}_0)\vec{v}_0\Big] + \nabla_{\vec{q}}\cdot\Big(p_{K}I+p_{K}\frac{\lambda_1}{2}\nabla_{\vec{q}}\vec{\nu}^T\nabla_{\vec{q}}\vec{\nu}\Big)&=0,\\
    \lambda_1\rho {\ddot{\vec{\nu}}}+\nabla_{\vec{q}}\cdot\Big(p_K\frac{\lambda_1}{2}\nabla_{\vec{q}}\vec{\nu}\Big) &= \tau\vec{\nu},\\
    \norm{\vec{\nu}} &= 1,
  \end{align} 
\end{subequations}
where the last equation is the constraint on the nematic director length that comes from \eqref{eq:InertiaTensorNeedle}.
We observe that the previous system of equations has five unknowns ($\rho$, $\vec{v}_0$, $p_K$, $\lambda$, $\vec{\nu}$) and only four equations.
This is because we still need to consider the equation describing the evolution of $\psi_0$.
To obtain this equation we start from \eqref{eq:equationOfPowerExpandedWithOutDivergence}, with adiabatic conditions, 
\begin{equation}
  \rho\Big[\partial_t\psi_0+\nabla_{\vec{q}}\psi_0\cdot \vec{v}_0\Big]+\Big(p_{K}I+p_{K}\frac{\lambda_1}{2}\nabla_{\vec{q}}\vec{\nu}^T\nabla_{\vec{q}}\vec{\nu}\Big):\nabla_{\vec{q}}\vec{v}_0 = 0,\label{eq:LeslieEricksenWithPressureEnergy}
\end{equation}
and $\psi_0$ can be related to the other five variables in the system via \eqref{eq:kineticPressure}.
We conclude now by observing that the combination of \eqref{eq:LeslieEricksenWithPressure}, \eqref{eq:LeslieEricksenWithPressureEnergy} and \eqref{eq:kineticPressure} form a closed system that describes the evolution of a rarefied gas subject to a nematic ordering, near thermodynamic equilibrium and in adiabatic conditions.
For completeness, we report the final system of six equations determining the evolution of the six macroscopic unknowns $\rho$, $\vec{v}_0$, $p_K$, $\vec{\nu}, \tau$ and $\psi_0$:
\begin{subequations}
  \label{eq:LeslieEricksenWithPressureAndEnergy}
  \begin{align}
    \partial_t \rho + \nabla_{\vec{q}}\cdot (\rho\vec{v}_0) &= 0,\\
    \rho\Big[\partial_t \vec{v}_0 + (\nabla_{\vec{q}}\vec{v}_0)\vec{v}_0\Big] + \nabla_{\vec{q}}\cdot\Big(p_{K}I+p_{K}\frac{\lambda_1}{2}\nabla_{\vec{q}}\vec{\nu}^T\nabla_{\vec{q}}\vec{\nu}\Big) &= 0,\\
    \lambda_1\rho {\ddot{\vec{\nu}}}+\nabla_{\vec{q}}\cdot\Big(p_K\frac{\lambda_1}{2}\nabla_{\vec{q}}\vec{\nu}\Big) &= \tau\vec{\nu},\\
    \rho\Big[\partial_t\psi_0+\nabla_{\vec{q}}\psi_0\cdot \vec{v}_0\Big]+\Big(p_{K}I+p_{K}\frac{\lambda_1}{2}\nabla_{\vec{q}}\vec{\nu}^T\nabla_{\vec{q}}\vec{\nu}\Big):\nabla_{\vec{q}}\vec{v}_0 &= 0,\\
    p_K &= \frac{6}{5}\frac{\rho}{m}(I_1I_2I_3)^{\frac{1}{2}}\psi_0, \\
    \norm{\vec{\nu}} &= 1.
  \end{align} 
\end{subequations}
These derived equations \eqref{eq:LeslieEricksenWithPressureAndEnergy} are an inviscid compressible variant of the Leslie--Ericksen equations. Under isothermal and isobaric conditions, this system reduces to the inviscid Leslie--Ericksen equations.
\section{Conclusions}
\; In this work we have derived a set of equations describing the evolution of a rarefied gas subject to a nematic ordering near thermodynamic equilibrium, using a novel moment closure technique inspired by the Noll--Coleman procedure.
At the heart of the proposed moment closure technique remains the Chapman--Enskog expansion, which approximates the distribution function as a perturbation from the Maxwellian with order corresponding to increasing powers of the Knudsen number.
We considered the first-order approximation, which is only valid near thermodynamic equilibrium, i.e.~for extremely small Knudsen numbers.
The derived equations are an inviscid compressible variant of the Leslie--Ericksen equations, which describe the evolution of the density, the velocity, the pressure, the nematic director, the Lagrange multiplier enforcing the unit length constraint on the nematic director, and the internal energy.

We remark that the main differences between the derived equations and the classical Leslie--Ericksen equations are the compressibility of the fluid and the lack of viscosity.
This last point is a consequence of the well-known fact that the first-order Chapman--Enskog expansion does not capture viscous effects.

In future work, we plan to extend our derivation using a higher-order Chapman--Enskog expansions or the van Kampen elimination procedure to include viscous effects \cite{vanKampen}.
We hope such an extension will allow us to derive a viscous compressible variant of the Leslie--Ericksen equations.

Lastly, we plan to numerically investigate the behavior of the derived equations and their viscous extension, to better understand the role of the nematic ordering in the dynamics of rarefied gases and to compare the results with the classical Leslie--Ericksen equations.

\appendix
\section{Rational mechanics of rigid bodies}
\label{sec:rationalMechanics}
Here we give a brief summary of the basic notions of the rational mechanics of rigid bodies.
In particular, we will discuss our choice of generalized coordinates used to describe the configuration space of a rigid body and what are the corresponding conjugate momenta.
We will assume our calamitic molecules can be modeled as a discrete rigid system of $N$ material points and limit our focus to this specific case. It is trivial to extend the notions here presented to the case $N\to\infty$.
The presentation is primarily drawn from \cite{fasanoMarmi}.
We redirect the reader interested in a more in-depth study of this topic to Goldstein et al.~or Whittaker \cite{goldsteinPooleSafko,whittaker}.
\begin{definition}
  We call a {rigid body} a set of material points $\{(P_1,m_1)\}_{i=1}^N$, that at all time satisfy a set of so-called {rigidity constraints}:
  \begin{equation}
    \norm{P_i-P_j}=R_{ij},\qquad 1\leq i\leq j \leq N. 
  \end{equation}
  where $\norm{\cdot}$ denotes the standard Euclidean norm in $\mathbb{R}^3$.
  Furthermore, we require the rigidity constraints to satisfy the Euclidean compatibility conditions, i.e.~
  \begin{enumerate}
    \item $R_{ij}\geq 0$ and $R_{ij}=0$ if and only if $P_i=P_j$,
    \item if $P_{i'} =  \lambda P_i$ and $P_{j'}=\lambda P_j$ then $R_{i'j'}=\lambda R_{ij}$,
    \item $R_{ij}\leq R_{ik}+R_{kj}$ for all $1\leq k < j \leq N$. 
  \end{enumerate}
  Lastly, we require that if $N>3$, not all $\frac{N(N-1)}{2}$ equations are independent. In particular we require the system of the rigidity equations to have rank at most $3$.
\end{definition}
The system of rigidity constraints having at most rank 3 implies that once the position of three points are determined, all other points of the material body can be located using the rigidity constraints.

If we have a rigid body consisting of three unaligned material points, $P_1$, $P_2$, and $P_3$, and a fixed reference frame $(O,\vec{e}_1,\vec{e}_2,\vec{e}_3)$ we can identify the position of the three points assigning three coordinates to $P_1$, two coordinates to $P_2$ and one coordinate to $P_3$.
Therefore, any three unaligned points can be identified by six coordinates.
Notice also that this tells us that any rigid body can be identified by six coordinates provided that it contains at least three unaligned points. In the case where all points in the system are aligned, we lose one degree of freedom because we only need five coordinates to identify two key material points and then all the other material points of the rigid body can be located using the rigidity constraints.
\begin{proposition}
  The configuration space for a rigid body containing at least three unaligned points is $\mathbb{R}^3\times SO(3)$.
\end{proposition}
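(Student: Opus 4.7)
The plan is to exhibit an explicit bijection between rigid configurations and $\mathbb{R}^3\times SO(3)$, making use of the observations the paper has just recorded (that three unaligned points determine the rest via the rigidity constraints, and that six coordinates suffice). First I would fix a reference configuration of the body and pick three unaligned material points $P_1,P_2,P_3$. From these I would construct a body-attached orthonormal frame $(\vec{f}_1,\vec{f}_2,\vec{f}_3)$ by Gram--Schmidt applied to $P_2-P_1$ and $P_3-P_1$; this is well defined precisely because the three points are unaligned, so the two displacement vectors are linearly independent.

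Next I would define a map $\Phi$ from the space of configurations to $\mathbb{R}^3\times SO(3)$ by sending each configuration to the pair $(\vec{q},R)$, where $\vec{q}$ is the position of $P_1$ expressed in the fixed frame $(O,\vec{e}_1,\vec{e}_2,\vec{e}_3)$ and $R$ is the unique linear map taking the reference body frame to the body frame in the current configuration. The rigidity constraints force $R$ to preserve all pairwise distances among $\{P_i-P_1\}$, hence to be orthogonal; orientation preservation (the physical motion of the body is continuously connected to the identity) pins $R$ down to $SO(3)$ rather than $O(3)$. In the reverse direction I would define $\Psi:\mathbb{R}^3\times SO(3)\to\{\text{configurations}\}$ by placing $P_1$ at $\vec{q}$, setting $P_i=\vec{q}+R(P_i^{\text{ref}}-P_1^{\text{ref}})$ for $i=2,3$, and extending to the remaining points through the rigidity constraints, which is consistent by the rank-three hypothesis. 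A direct check that $\Psi\circ\Phi$ and $\Phi\circ\Psi$ are identities concludes the argument.

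The main obstacle I expect is the orientation issue: rigidity alone guarantees only that $R\in O(3)$, and one needs an extra physical input to exclude the reflection component. The cleanest way to handle this is to observe that admissible configurations form the path-connected set reachable from the reference configuration by continuous rigid motions, and that such a path in $O(3)$ must remain in the connected component of the identity, namely $SO(3)$. A secondary subtlety is the choice of three \emph{unaligned} points: if the rigid body actually contains only collinear points, the body frame above degenerates and one recovers only the five-dimensional configuration space flagged earlier in the discussion of the slender-body limit, which is consistent with the hypothesis of the proposition.
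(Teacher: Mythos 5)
Your argument is correct and goes beyond what the paper offers: the paper has no formal proof of this proposition, just the preceding dimension count (three coordinates for $P_1$, two for $P_2$, one for $P_3$) followed by the bare identification of these six coordinates with a center-of-mass position and a rotation expressed as a body reference frame. You instead exhibit a concrete bijection, with Gram--Schmidt applied to $P_2-P_1$ and $P_3-P_1$ producing the body frame (well-defined precisely because the points are unaligned), the forward map $\Phi$ recording $(\vec{q},R)$, and an inverse $\Psi$ extended to the remaining points via the rank-three rigidity constraints. This is strictly stronger than a dimension count, which by itself would not distinguish $\mathbb{R}^3\times SO(3)$ from any other six-dimensional manifold. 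The most valuable thing you add is the $O(3)$ versus $SO(3)$ issue that the paper silently passes over: rigidity forces $R$ into $O(3)$ only, and your path-connectedness argument (configurations reachable from the reference by continuous rigid motion trace a path in $O(3)$ starting at the identity, hence stay in $SO(3)$) is the correct and standard way to exclude the reflection component. One caveat worth flagging as a footnote rather than a gap: if the body possesses an internal rotational or reflective symmetry, $\Phi$ is no longer injective on physically distinct placements and the true configuration space is a quotient of $\mathbb{R}^3\times SO(3)$ by the stabilizer; for the generic rigid bodies treated here this does not affect anything, but it is the hidden hypothesis behind the bijectivity claim.
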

Since we can represent an element of $SO(3)$ via an orthogonal matrix with positive determinant, we can identify the configuration space of a rigid body with the position of its center of mass $O$, and the three vectors $\vec{e}_1,\,\vec{e}_2,\,\vec{e}_3$ representing the column of the orthogonal matrix.
We call the quadruplet $(O,\vec{e}_1,\vec{e}_2,\vec{e}_3)$ a body \emph{reference frame}.

We now introduce Euler angles. Given two different body reference frames $(O,\vec{e}_1,\allowbreak \vec{e}_2,\vec{e}_3)$ and $(O,\hat{\vec{e}_1},\hat{\vec{e}_2},\hat{\vec{e}}_3)$ we can represent the $SO(3)$ transformation that rotates one reference frame to the other making use of three angles $\alpha_1$, $\alpha_2$ and $\alpha_3$
(see Fig.~\ref{fig:EulerAngles}):
\begin{enumerate}
  \item the \emph{precession angle} $\alpha_1$, defined as the angle around $\vec{e}_3$ that aligns $\vec{e}_1$ with the intersection of the planes identified by the normals $\vec{e}_3$ and $\hat{\vec{e}}_3$;
  \item the \emph{nutation angle} $\alpha_2$, defined as the angle between $\vec{e}_3$ and $\hat{\vec{e}}_3$;
  \item the \emph{intrinsic rotation angle} $\alpha_3$, defined as the angle around $\hat{\vec{e}}_3$ that aligns $\hat{\vec{e}}_1$ with the intersection of the planes identified by the normals $\vec{e}_3$ and $\hat{\vec{e}}_3$.  
\end{enumerate}
\begin{figure}[h]
    \caption{A depiction of the precession, nutation and intrinsic rotation angles.}
    \centering
    \vspace{0.5cm}
    \label{fig:EulerAngles}
    \scalebox{0.5}[0.5]{\begin{tikzpicture}[line cap=round]
\def\pre{-30} 
\def\nut{40}  
\def\rop{30}  
\begin{scope}[shift={(-8,0)},my view]
\draw[fill=my cyan] (0,0) circle (2);
\draw[axis] (0,0,0) -- (4,0,0)     node[below] {$\hat{\vec{e}}_1$};
\draw[axis] (0,0,0) -- (0,4,0)     node[right] {$\hat{\vec{e}}_2$};
\draw[axis] (0,0,0) -- (0,0,4)     node[above] {$\hat{\vec{e}}_3=\vec{e}_3$};
\draw[axis] (0,0,0) -- (\pre:4)    node[below] {$\vec{e}_1$};
\draw[axis] (0,0,0) -- (90+\pre:4) node[below] {$\vec{e}_2$};
\foreach\i in {0,90}
  \draw[-latex] (\i+\pre:2.5) arc (\i+\pre:\i:2.5)
        node at (\i+0.5*\pre:3) {$\alpha_1$};
\draw[canvas is xy plane at z=2,->] (90:0.5) arc (90:360:0.5) node[pos=0,right] {Precession $\alpha_1$};
\end{scope}
\begin{scope}[my view]
\draw[fill=my magenta,nutation] (2,0) arc (0:-180:2) -- cycle;
\draw[fill=my cyan] (0,0) circle (2);
\draw[axis]   (0,0,0) -- (0,4,0) node[right] {$\vec{e}_2$};
\begin{scope}[nutation]
  \draw[fill=my magenta] (2,0) arc (0:180:2) -- cycle;
  \draw[axis] (0,0,0) -- (0,4,0) node[above] {$\hat{\vec{e}_2}$};
  \draw[axis] (0,0,0) -- (0,0,4) node[above] {$\hat{\vec{e}_3}$};
\end{scope}
\foreach\i in {0,90}
  \draw[my view,canvas is yz plane at x=0,-latex] (\i:2.5) arc (\i:\i+\nut:2.5)
                                          node at (\i+0.5*\nut:3) {$\alpha_2$};
\draw[axis] (0,0,0) -- (4,0,0) node[below] {$\vec{e}_1=\hat{\vec{e}}_1$};
\draw[axis] (0,0,0) -- (0,0,4) node[above] {$\vec{e}_3$};
\draw[canvas is yz plane at x=2,->] (-90:0.5) arc (-90:170:0.5) node[yshift=-0.3cm, pos=0.5cm,right] {Nutation $\alpha_2$};
\end{scope}
\begin{scope}[shift={(8,0)},my view,nutation]
\draw[fill=my magenta] (0,0) circle (2);
\draw[axis] (0,0,0) -- (4,0,0)     node[right] {$\hat{\vec{e}}_1$};
\draw[axis] (0,0,0) -- (0,4,0)     node[above] {$\hat{\vec{e}}_2$};
\draw[axis] (0,0,0) -- (0,0,4)     node[above] {$\vec{e}_3=\hat{\vec{e}}_3$};
\draw[axis] (0,0,0) -- (\rop:4)    node[right] {${\vec{e}}_1$};
\draw[axis] (0,0,0) -- (90+\rop:4) node[above] {${\vec{e}}_2$};
\foreach\i in {0,90}
  \draw[-latex] (\i:2.5) arc (\i:\i+\rop:2.5)
        node at (\i+0.5*\rop:3) {$\alpha_3$};
\draw[canvas is xy plane at z=2,->] (-90:0.5) arc (-90:160:0.5) node[xshift=-7mm, yshift=-7mm,pos=0,text width=2cm] {Intrinsic Rotation $\alpha_3$};
\end{scope}
\end{tikzpicture}}
\end{figure}
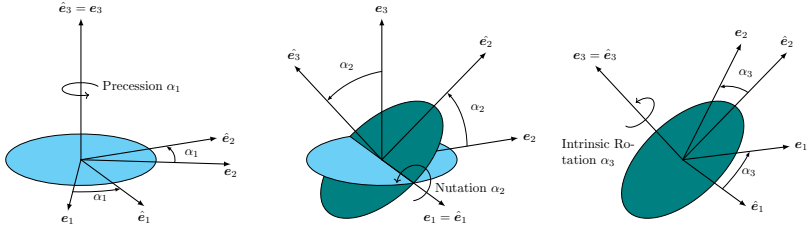
The intersection of the planes identified by the normals $\vec{e}_3$ and $\hat{\vec{e}}_3$ is called the \emph{node line} $\vec{N}$ and we denote $\vec{\alpha}=(\alpha_1,\alpha_2,\alpha_3)$.

We now proceed with some basic results.
\begin{proposition}
  The Eulerian velocity field of a rigid body can be expressed by the formula
  \begin{equation}
    v(\vec{x},t)= \vec{\omega}(t)\times \Big(\vec{x}-\vec{c}(t)\Big)+\dot{\vec{c}}(t), \label{eq:Eulerian_velocity}
  \end{equation}
  where $\vec{\omega}$ is referred to as the \emph{angular velocity} and $c(t)$ is the velocity of the center of mass of the rigid body.
\end{proposition}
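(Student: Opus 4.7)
The plan is to combine the parametrization of the rigid-body configuration space as $\mathbb{R}^3\times SO(3)$ with the rigidity constraints. First I would fix a body reference frame attached to the center of mass $\vec{c}(t)$, with time-dependent orthonormal basis $\{\vec{e}_i(t)\}_{i=1}^3$. Because the pairwise distances $R_{ij}$ are time-independent, each material point admits a representation $\vec{x}(t) = \vec{c}(t) + R(t)\vec{x}_0$, where $\vec{x}_0\in\mathbb{R}^3$ is its constant coordinate vector in the body frame and $R(t)\in SO(3)$ is the rotation mapping body coordinates into space coordinates at time $t$. Differentiating in $t$ and using $R^T(t)R(t)=I$ to eliminate $\vec{x}_0$ yields
\begin{equation}
v(\vec{x},t) \;=\; \dot{\vec{x}}(t) \;=\; \dot{\vec{c}}(t) + \dot R(t)\,R^T(t)\bigl(\vec{x}(t)-\vec{c}(t)\bigr).
\end{equation}

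The next step is to identify $W(t):=\dot R(t) R^T(t)$ with a cross-product. Differentiating the orthogonality constraint $R(t)R^T(t)=I$ in $t$ gives $W(t)+W(t)^T = 0$, so $W(t)$ is skew-symmetric. The hat map provides a linear isomorphism between $3\times 3$ skew-symmetric matrices and $\mathbb{R}^3$: there exists a unique $\vec{\omega}(t)\in\mathbb{R}^3$ such that $W(t)\vec{y} = \vec{\omega}(t)\times\vec{y}$ for every $\vec{y}\in\mathbb{R}^3$. Substituting this representation into the displayed equation yields precisely \eqref{eq:Eulerian_velocity}.

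The main conceptual obstacle, and essentially the only one, is verifying that $\vec{\omega}(t)$ is a genuine attribute of the body rather than an artifact of the chosen base point, so that \eqref{eq:Eulerian_velocity} defines an Eulerian \emph{field}. This follows because the skew-symmetric matrix $W(t)$ depends only on the global rotation $R(t)$. If one replaces $\vec{c}(t)$ by another material point $\vec{c}'(t) = \vec{c}(t)+R(t)\vec{d}_0$ with $\vec{d}_0$ constant, then $\dot{\vec{c}}'(t) = \dot{\vec{c}}(t) + W(t)\bigl(\vec{c}'(t)-\vec{c}(t)\bigr)$, so the right-hand side of \eqref{eq:Eulerian_velocity} with $\vec{c}$ and with $\vec{c}'$ agree identically on the body. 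Once this invariance is recorded, no further computation is required beyond the hat-map identification.
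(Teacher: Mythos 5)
The paper states this proposition without proof; it is presented as background material on rigid-body kinematics, with the reader redirected to the cited references (Fasano--Marmi, Goldstein et al., Whittaker). Your argument is the standard one and is correct: parametrize the motion as $\vec{x}(t)=\vec{c}(t)+R(t)\vec{x}_0$, differentiate, observe that $W(t)=\dot R(t)R^T(t)$ is skew-symmetric (from $\tfrac{d}{dt}(RR^T)=0$), and invoke the hat-map identification to produce $\vec{\omega}$. The closing remark on base-point invariance is a worthwhile addition, since it is exactly what makes $\vec{\omega}$ a property of the rigid configuration rather than of the chosen reference point, and hence makes the right-hand side a well-defined Eulerian field. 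One small point worth being explicit about, if you expand this into prose: the representation $\vec{x}(t)=\vec{c}(t)+R(t)\vec{x}_0$ is itself a consequence of the rigidity constraints (preservation of all pairwise distances forces the map on the body to be an isometry of $\mathbb{R}^3$, and continuity in $t$ together with the identity at $t=0$ forces the linear part to lie in $SO(3)$); you invoke this implicitly via the $\mathbb{R}^3\times SO(3)$ configuration space, which is fine, but it is the step that actually uses the hypothesis of rigidity.
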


\begin{theorem}\label{thm:totalDerivativeDirector}
  Under any rigid motion, we can express the total derivative of any unit vector $\vec{\nu}\in \mathbb{S}^2$, that expresses the direction connecting two points making up the rigid body, as
  \begin{equation}
    \dot{\vec{\nu}} = \vec{\omega}\times \vec{\nu}.
  \end{equation} 
\end{theorem}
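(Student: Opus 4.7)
The plan is to deduce the identity directly from the Eulerian velocity representation \eqref{eq:Eulerian_velocity} stated in the preceding proposition, together with the fact that the distance between any two material points of the rigid body is constant in time (a rigidity constraint).

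First I would parametrise $\vec{\nu}$ explicitly: let $P_1$ and $P_2$ be two material points of the rigid body whose connecting direction is $\vec{\nu}$, with positions $\vec{x}_1(t)$ and $\vec{x}_2(t)$, and set $R = R_{12} = \norm{\vec{x}_2 - \vec{x}_1}$. By the rigidity constraint, $R$ is constant in time, so
\begin{equation}
\vec{\nu}(t) = \frac{\vec{x}_2(t) - \vec{x}_1(t)}{R},
\qquad
\dot{\vec{\nu}}(t) = \frac{\dot{\vec{x}}_2(t) - \dot{\vec{x}}_1(t)}{R}.
\end{equation}

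Next I would apply \eqref{eq:Eulerian_velocity} at each endpoint. Since both $P_1$ and $P_2$ belong to the same rigid body, they share the same angular velocity $\vec{\omega}(t)$ and the same reference point $\vec{c}(t)$, so
\begin{equation}
\dot{\vec{x}}_i(t) = \vec{\omega}(t) \times \bigl(\vec{x}_i(t) - \vec{c}(t)\bigr) + \dot{\vec{c}}(t), \qquad i = 1,2.
\end{equation}
Subtracting the two equations, the translational contribution $\dot{\vec{c}}$ cancels and so does the $-\vec{\omega}\times\vec{c}$ piece, leaving $\dot{\vec{x}}_2 - \dot{\vec{x}}_1 = \vec{\omega}\times(\vec{x}_2 - \vec{x}_1)$. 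Dividing by $R$ and using bilinearity of the cross product in the second argument yields $\dot{\vec{\nu}} = \vec{\omega}\times\vec{\nu}$.

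The calculation is essentially routine; the only thing worth pausing on is well-posedness, namely that the direction $\vec{\nu}$ really is realisable by a pair of material points of the rigid body so that the Eulerian velocity formula can be applied at both endpoints. This is guaranteed by the hypothesis in the statement (``that expresses the direction connecting two points making up the rigid body''), and is consistent with the rigid body possessing at least three non-aligned material points as per the preceding proposition. A brief remark that the result is independent of the choice of the two points representing $\vec{\nu}$ -- since any two such choices differ by a further rigid rotation with the same $\vec{\omega}$ -- would complete the proof.
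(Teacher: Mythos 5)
Your proposal is correct and follows essentially the same route as the paper: express $\vec{\nu}$ as the difference of two material points divided by their (time-independent) separation, apply the Eulerian velocity formula \eqref{eq:Eulerian_velocity} at both endpoints, subtract so the translational parts cancel, and divide by the constant norm. The closing remarks on well-posedness and independence of the representing pair are sensible additions but not present in the paper's own proof.
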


\begin{proof}
Let $\vec{x}_1$ and $\vec{x}_2$ denote two points of the rigid body aligned along the direction $\vec{\nu}$, i.e.~\ 
$\vec{\nu} = (\vec{x}_2-\vec{x}_1)/\norm{\vec{x}_2-\vec{x}_1}$.
The velocity of the points can be expressed by relation 
\eqref{eq:Eulerian_velocity}:
\begin{align}
    \dot{\vec{x}}_1 =\vec{\omega}\times \Big(\vec{x}_1-\vec{c}\Big)+\dot{\vec{c}},\quad 
    \dot{\vec{x}}_2 =\vec{\omega}\times \Big(\vec{x}_2-\vec{c}\Big)+\dot{\vec{c}}.
\end{align}
Subtracting the first equation from the second we obtain:
 \begin{equation*}
     \dot{\vec{x}}_2 - \dot{\vec{x}}_1  =\vec{\omega}\times 
     \Big(\vec{x}_2-\vec{x}_1\Big)
 \end{equation*}
The result follows on division by $\norm{\vec{x}_2-\vec{x}_1}$, which remains constant under rigid motion.
\end{proof}

We now would like to express $\vec{\omega}$ as the total derivative of the Euler angles. To achieve this we decompose the angular velocity in terms of the rotations mandated by the Euler angles,
\begin{equation}
  \vec{\omega} = \dot{\alpha_1}\vec{z}+ \dot{\alpha_2} \vec{N}+\dot{\alpha_3}\hat{\vec{z}},
\end{equation}
where $(O,\vec{x},\vec{y},\vec{z})$ and $(O,\hat{\vec{x}},\hat{\vec{y}},\hat{\vec{z}})$ are the body reference frame before and after the rotation respectively, and $\vec{N}$ is the node line.
Now we introduce $\vec{N}_{\bot}=\vec{\hat{z}}\times\vec{N}$ and observe that the following identities hold:
\begin{align}
  \vec{z} &= \sin(\alpha_2)\vec{N}_{\bot}+\cos(\alpha_2)\hat{\vec{z}}, \\
  \vec{N} &= \cos(\alpha_3)\hat{\vec{x}}-\sin(\alpha_3)\hat{\vec{y}}, \\
  \vec{N}_{\bot} &= \sin(\alpha_3)\hat{\vec{x}}+\cos(\alpha_3)\hat{\vec{y}}.
\end{align}
Projecting $\vec{\omega}$ onto $(O,\hat{\vec{x}},\hat{\vec{y}},\hat{\vec{z}})$ yields
\begin{align}
  \vec{\omega} &= \Big(\dot{\alpha_2}\cos(\alpha_3)+\dot{\alpha_1}\sin(\alpha_2)\sin(\alpha_3)\Big)\hat{\vec{x}}\\
               &+ \Big(-\dot{\alpha_2}\sin(\alpha_3)+\dot{\alpha_1}\sin(\alpha_2)\cos(\alpha_3)\Big)\hat{\vec{y}} + \Big(\dot{\alpha_3}+\dot{\alpha_1}\cos(\alpha_2)\Big)\hat{\vec{z}}.\nonumber
\end{align}
We can rewrite the previous expression in matrix form as follows,
\begin{equation}
  \vec{\omega} = \begin{bmatrix}
    \sin(\alpha_2)\sin(\alpha_3) & \cos(\alpha_3) & 0\\
    \sin(\alpha_2)\cos(\alpha_3) & -\sin(\alpha_3) & 0\\
    \cos(\alpha_2) & 0 & 1
  \end{bmatrix}\begin{bmatrix}
    \dot{\alpha_1}\\\dot{\alpha_2}\\\dot{\alpha_3}
  \end{bmatrix} = \Xi\dot{\vec{\alpha}}.\label{eq:angularVelocity}
\end{equation}
We now introduce notions related to the dynamic aspects of rigid bodies, rather than the kinematic aspects.
\begin{definition}
  We define the \emph{center of mass} of a rigid body as 
  \begin{equation}
    O_{M} = \frac{1}{M}\sum_{i=1}^N m_i P_i,
  \end{equation}
  where $M$ is the total mass of the rigid body, i.e.~$M=\sum_{i=1}^{N}m_i$.
\end{definition}
Once the center of mass has been defined we can introduce the notion of \emph{inertia moment} with respect to the planes identified respectively by the orthogonal normals $\vec{n}$ and $\vec{n}'$, i.e.~
\begin{equation}
  \Gamma_{\vec{n}\vec{n}'} = \sum_{i=1}^{N}m_i \Big[(P_i-O_M)\cdot \vec{n}\Big] \Big[(P_i-O_M)\cdot \vec{n}'\Big]. 
\end{equation} 
Once we are equipped with this definition we can introduce the notion of inertia tensor with respect to a  reference frame $(O,\vec{x},\vec{y},\vec{z})$, i.e.~
\begin{equation}
  \mathbb{I} = \begin{bmatrix}
    \Gamma_{\vec{x}\vec{x}} &\Gamma_{\vec{x}\vec{y}} & \Gamma_{\vec{x}\vec{z}}\\
    \Gamma_{\vec{y}\vec{x}} &\Gamma_{\vec{y}\vec{y}} & \Gamma_{\vec{y}\vec{z}}\\
    \Gamma_{\vec{z}\vec{x}} &\Gamma_{\vec{z}\vec{y}} & \Gamma_{\vec{z}\vec{z}}
  \end{bmatrix}.
\end{equation}
In particular, we notice that since $\mathbb{I}$ is a symmetric matrix it can be orthogonally diagonalised, i.e.~
\begin{equation}
  \mathbb{I} = Q\begin{bmatrix}
    \Gamma_1 &       &       \\
        & \Gamma_2   &       \\
        &       & \Gamma_3 
  \end{bmatrix} Q^T, \qquad Q\in SO(3).
\end{equation} 
Notice that $\Gamma_i$ are positive because they can be expressed as $\Gamma_{\vec{n}\vec{n}}$ for a certain $\vec{n}$.
We will refer to $\Gamma_1$, $\Gamma_2$ and $\Gamma_3$ as the \emph{principal moments of inertia}.
We call the principal reference frame the reference frame obtained by applying to $(O,\vec{x},\vec{y},\vec{z})$ the rotation $Q$.

Lastly, we would like to apply the Hamiltonian formalism to the study of rigid motion. To achieve this we first observe that the Lagrangian corresponding to a rigid motion is only composed by the kinetic energy term,
\begin{equation}
  T = \frac{1}{2}\abs{\vec{v}}^2+\frac{1}{2}\vec{\omega}\cdot \mathbb{I}\vec{\omega}.
\end{equation}
We chose as Lagrangian coordinates the position in space and the Euler angles. We therefore need to express the above kinetic energy in terms of the total derivatives of the position and of the Euler angles. To achieve this we resort to \eqref{eq:angularVelocity},
\begin{equation}
  \mathcal{L}(\vec{q},\vec{\alpha},\dot{\vec{q}},\dot{\vec{\alpha}})=\frac{1}{2}m\abs{\dot{\vec{q}}}^2+\frac{1}{2}\dot{\vec{\alpha}}\cdot \Xi^T\mathbb{I}\Xi\dot{\vec{\alpha}}.
\end{equation}
By Sylvester's law of inertia \cite{sylvester}, the congruent transformation $D_{\dot{\vec{\alpha}}}^2\mathcal{L} = \Xi^T\mathbb{I}\Xi$
preserves the sign of the eigenvalues, i.e.~the eigenvalues of $\Xi^T\mathbb{I}\Xi$ have the same signs as the eigenvalues of $\mathbb{I}$, which are all positive. The Hessian $D^2 \mathcal{L}$ of the Lagrangian $\mathcal{L}$ expressed in terms of position, Euler angles and their total time derivative therefore remains positive definite. Thus the Legendre transformation that defines the Hamiltonian coordinates is well-posed:
\begin{equation}
  \vec{p}= \frac{\partial \mathcal{L}}{\partial \dot{\vec{q}}}=m\dot{\vec{q}}, \qquad \vec{\varsigma}=\frac{\partial\mathcal{L}}{\partial\dot{\vec{\alpha}}}=\Xi^T\mathbb{I}\Xi\dot{\vec{\alpha}}.
\end{equation}
Notice now that given the fact that $D^2\mathcal{L}$ is positive definite then $\varsigma$ is well-defined even if it depends on both $\gr{\vec{\alpha}}$ and $\gr{\dot{\vec{\alpha}}}$, and we can proceed to define the Hamiltonian of a rigid motion as
\begin{equation}
  \mathcal{H}(\vec{q},\vec{\alpha},\vec{p},\vec{\varsigma}) = \frac{1}{2m}\abs{\vec{p}}^2+\frac{1}{2}\vec{\varsigma}\cdot(\Xi^{T}\mathbb{I}\Xi)^{-1}\vec{\varsigma}.\label{eq:HamiltonianRigidBody}
\end{equation}
Hence our choice of coordinates in \eqref{eq:boltzman} is perfectly admissible.
\section{The Boltzmann--Curtiss equation}
In this section we provide a detailed derivation of the Boltzmann--Curtiss equation for convex non-spherical symmetric-top rigid molecules, using the modern BBGKY hierarchy approach \cite{harris}.
We assume the calamitic rarefied gas under consideration is an ensemble of $N$ convex non-spherical symmetric-top rigid bodies.
The state of each particle can be identified by the coordinates $\vec{\Gamma}=(\vec{q},\vec{\alpha},\vec{p},\vec{\varsigma})$, where $\vec{q}$ represents the position of the center of mass of the particle, $\vec{\alpha}$ are the Euler angles representing its orientation while $\vec{p}$ and $\vec{\varsigma}$ are the conjugate moments to $\vec{q}$ and $\vec{\alpha}$ respectively.
The configuration of the rarefied calamitic fluid at time $t$ is given by $\{\vec{\Gamma}_i^*(t)\}_{i=1}^N=\{(\vec{q}_i(t),\vec{\alpha}_i(t),\vec{p}_i(t),\vec{\varsigma}_i(t))\}_{i=1}^N$. 
We can now introduce the Klimontovich distribution $\pi$ and the reduced particle distribution $f_s$ as follows,
\begin{gather}
  \pi\Big(\{\vec{\Gamma}_i\}_{i=1}^N,t\Big) = \sum_{i=1}^N \delta\Big(\vec{\Gamma}_i-\vec{\Gamma}_i^*(t)\Big),\\
  f_s (\vec{\Gamma}_1,\dots,\vec{\Gamma}_s,t)=\int \pi(\vec{\Gamma}_1,\dots,\vec{\Gamma}_s,\vec{\Gamma}^{(s)},t)\,d\vec{\Gamma}^{(s)},\label{eq:ParticleDist}
\end{gather} 
where $d\vec{\Gamma}^{(s)}=d\vec{\Gamma}_{s+1} \dots d\vec{\Gamma}_{\gr{N}}$.
From \eqref{eq:HamiltonianRigidBody} we know that the Hamiltonian of the particle ensemble is given by
\begin{equation}
  \mathcal{H}= \left(\sum_{i=1}^{N} \frac{\abs{\vec{p}_i}^2}{2m}+\frac{1}{2}\vec{\varsigma}_i\cdot(\Xi^{T}\mathbb{I}\Xi)^{-1}\vec{\varsigma}_i\right)+\sum_{1\leq i < j \leq N}\phi(\vec{q}_i,\vec{q}_j,\vec{\alpha}_i,\vec{\alpha}_j),
\end{equation} 
where $\mathbb{I}$ is the inertia tensor of a single molecule, $\Xi$ is the tensor used to pass from angular velocity to the total time derivative of the Euler angles in \eqref{eq:angularVelocity}, and $\phi:\mathbb{R}^3\to \mathbb{R}$ is the interaction potential we are considering in addition to the Hamiltonian corresponding to the rigid motion described in \eqref{eq:HamiltonianRigidBody}.
We decompose the Hamiltonian into three terms: two containing respectively contributions from particles $1,\dots,s$ and $s+1,\dots, N$ and a third term containing mixed terms,
\begin{gather}
  \mathcal{H}_s= \left(\sum_{i=1}^{s} \frac{\abs{\vec{p}_i}^2}{2m}+\frac{1}{2}\vec{\varsigma}_i\cdot(\Xi^{T}\mathbb{I}\Xi)^{-1}\vec{\varsigma}_i\right)+\sum_{1\leq i < j \leq s}\phi(\vec{q}_i,\vec{q}_j,\vec{\alpha}_i,\vec{\alpha}_j),\\
  \mathcal{H}_{N-s}= \left(\sum_{i=s+1}^{N} \frac{\abs{\vec{p}_i}^2}{2m}+\frac{1}{2}\vec{\varsigma}_i\cdot(\Xi^{T}\mathbb{I}\Xi)^{-1}\vec{\varsigma}_i\right)+\sum_{s+1\leq i < j \leq N}\phi(\vec{q}_i,\vec{q}_j,\vec{\alpha}_i,\vec{\alpha}_j),\\
  \hat{\mathcal{H}}_{s}= \sum_{i=1}^{s} \sum_{j=s+1}^{N} \phi(\vec{q}_i,\vec{q}_j,\vec{\alpha}_i,\vec{\alpha}_j).
\end{gather}
We then differentiate $f_s$ with respect to time and use Liouville's theorem
\begin{equation}
\frac{\partial \pi}{\partial t}=-\{\pi,\mathcal{H}\}
\end{equation}
to obtain the following integro-differential equation for the reduced particle distribution $f_s$,
\begin{equation}
  \frac{\partial f_s}{\partial t} 
  = \int \frac{\partial \pi}{\partial t}\, d\vec{\Gamma}^{(s)}
  =-\int \left(\{\pi,\mathcal{H}_s\}+\{\pi,\mathcal{H}_{n-s}\}+\{\pi,\hat{\mathcal{H}}_s\}\right)\,d\Gamma^{(s)}.
\end{equation}
We notice that since $\mathcal{H}_s$ is independent of $\vec{\Gamma}_{s+1},\dots,\vec{\Gamma}_{n}$ we can bring the integral inside the Poisson brackets to obtain the following equation,
\begin{equation}
  \frac{\partial f_s}{\partial t} 
  = \int \frac{\partial \pi}{\partial t}\, d\vec{\Gamma}^{(s)}
  = - \{f_s,\mathcal{H}_s\}
    - \int \{\pi,\mathcal{H}_{n-s}\} \,d\Gamma^{(s)}
    - \int\{\pi,\hat{\mathcal{H}}_s\}\,d\Gamma^{(s)}.
    \label{eq:Liouville}
\end{equation}
We then notice that the second term on the right-hand side vanishes given the fact it is an exact divergence and we have appropriate boundary conditions:
\begin{align}
  \int\!\!\{\pi&,\mathcal{H}_{N-s}\} d\vec{\Gamma}^{(s)}\!=\!\!\!\int\!\!\sum_{i=1}^{n}\!\Big(\!\frac{\partial \pi}{\partial \vec{q}_i}\frac{\partial \mathcal{H}_{N-s}}{\partial \vec{p}_i}\!+\!\frac{\partial \pi}{\partial \vec{\alpha}_i}\frac{\partial \mathcal{H}_{N-s}}{\partial \vec{\varsigma}_i}\!-\!\frac{\partial \pi}{\partial \vec{p}_i}\frac{\partial \mathcal{H}_{N-s}}{\partial \vec{q}_i}\!-\!\frac{\partial \pi}{\partial \vec{\varsigma}_i}\frac{\partial \mathcal{H}_{N-s}}{\partial \vec{\alpha}_i}\!\Big)d\vec{\Gamma}^{(s)}\\
  &=\int\sum_{i=s+1}^{N}\Big(\frac{\partial \pi}{\partial \vec{q}_i}\frac{\vec{p}_i}{m}+\frac{\partial \pi}{\partial \vec{\alpha}}(\Xi^T\mathbb{I}\Xi)^{-1}\vec{\varsigma}\Big)\,d\vec{\Gamma}^{(s)}\nonumber\\
  &\;\;\;\;-\int\sum_{i=s+1}^{N}\Big(\frac{\partial \pi}{\partial \vec{p}_i}\sum_{j=i+1}^{N}\frac{\partial  \phi(\vec{q}_i,\vec{q}_j,\vec{\alpha}_i,\vec{\alpha}_j)}{\partial\vec{q}_i}+\frac{\partial \pi}{\partial \vec{\varsigma}_i}\sum_{j=i+1}^{N}\frac{\partial \phi(\vec{q}_i,\vec{q}_j,\vec{\alpha}_i,\vec{\alpha}_j)}{\partial\vec{\alpha}_i}\Big)\,d\vec{\Gamma}^{(s)}\nonumber\\
  &=\fnmark\int\sum_{i=s+1}^{N}\Bigg(\frac{\partial }{\partial \vec{q}_i}\cdot  \Big(\pi\frac{\vec{p}_i}{m}\Big)+\frac{\partial}{\partial \vec{\alpha}}\cdot\Big(\pi(\Xi^T\mathbb{I}\Xi)^{-1}\vec{\varsigma}\Big)\Bigg)d\vec{\Gamma}^{(s)}\nonumber\\
  &\;\;\;\;\;\;\;-\int\sum_{i=s+1}^{N}\frac{\partial}{\partial \vec{p}_i}\cdot\Big(\pi\sum_{j=i+1}^{N}\frac{\partial\phi(\vec{q}_i,\vec{q}_j,\vec{\alpha}_i,\vec{\alpha}_j)}{\partial\vec{q}_i}\Big)\,d\vec{\Gamma}^{(s)}\nonumber\\
  &\;\;\;\;\;\;\;-\int\sum_{i=s+1}^{N}\frac{\partial}{\partial \vec{\varsigma}_i}\cdot\Big(\pi\sum_{j=i+1}^{N}\frac{\partial\phi(\vec{q}_i,\vec{q}_j,\vec{\alpha}_i,\vec{\alpha}_j)}{\partial\vec{\alpha}_i}\Big)\,d\vec{\Gamma}^{(s)}=0.\nonumber
\end{align}
\fntext{We are here using the fact that $(\Xi^T\mathbb{I}\Xi)^{-1}\vec{\varsigma}=\dot{\vec{\alpha}}$ together with the fact that $\dot{\vec{\alpha}}$ is independent of $\vec{\alpha}$.}
It remains to study the last term in \eqref{eq:Liouville}:
\begin{align}
  \int\!\{\pi,\hat{\mathcal{H}}_{s}\} d\vec{\Gamma}^{(s)}&=\!\int\!\sum_{i=1}^{N}\!\Big(\frac{\partial \pi}{\partial \vec{q}_i}\frac{\partial \hat{\mathcal{H}}_{s}}{\partial \vec{p}_i}\!+\!\frac{\partial \pi}{\partial \vec{\alpha}_i}\frac{\partial \hat{\mathcal{H}}_{s}}{\partial \vec{\varsigma}_i}\!-\!\frac{\partial \pi}{\partial \vec{p}_i}\frac{\partial \hat{\mathcal{H}}_{s}}{\partial \vec{q}_i}\!-\!\frac{\partial \pi}{\partial \vec{\varsigma}_i}\frac{\partial \hat{\mathcal{H}}_{s}}{\partial \vec{\alpha}_i}\Big)\,d\vec{\Gamma}^{(s)}\\
  &=\int\sum_{i=1}^{s}\frac{\partial \pi}{\partial \vec{p}_i}\cdot\sum_{j=s+1}^{N}\frac{\partial\phi(\vec{q}_i,\vec{q}_j,\vec{\alpha}_i,\vec{\alpha}_j)}{\partial\vec{q}_i}\,d\vec{\Gamma}^{(s)}\nonumber\\
  &+\int\sum_{j=s+1}^{N}\frac{\partial \pi}{\partial \vec{p}_j}\cdot\sum_{i=1}^{s}\frac{\partial\phi(\vec{q}_i,\vec{q}_j,\vec{\alpha}_i,\vec{\alpha}_j)}{\partial\vec{q}_j}\,d\vec{\Gamma}^{(s)}\nonumber\\
  &+\int\sum_{i=1}^{s}\frac{\partial \pi}{\partial \vec{\varsigma}_i}\cdot\sum_{j=s+1}^{N}\frac{\partial\phi(\vec{q}_i,\vec{q}_j,\vec{\alpha}_i,\vec{\alpha}_j)}{\partial\vec{\alpha}_i}\,d\vec{\Gamma}^{(s)}\nonumber\\
  &+\int\sum_{j=s+1}^{N}\frac{\partial \pi}{\partial \vec{\varsigma}_i}\cdot\sum_{i=1}^{s}\frac{\partial\phi(\vec{q}_i,\vec{q}_j,\vec{\alpha}_i,\vec{\alpha}_j)}{\partial\vec{\alpha}_j}\,d\vec{\Gamma}^{(s)}.\nonumber
\end{align}
We notice that the second and last term in the previous equation vanish since they are exact divergences, which leaves us with
\begin{align}
  \int\!\{\pi,\hat{\mathcal{H}}_{s}\} d\vec{\Gamma}^{(s)}\!&= 
  \int\sum_{i=1}^{s}\frac{\partial \pi}{\partial \vec{p}_i}\cdot\sum_{j=s+1}^{N}\frac{\partial\phi(\vec{q}_i,\vec{q}_j,\vec{\alpha}_i,\vec{\alpha}_j)}{\partial\vec{q}_i}\,d\vec{\Gamma}^{(s)}\\
  &+\int\sum_{i=1}^{s}\frac{\partial \pi}{\partial \vec{\varsigma}_i}\cdot\sum_{j=s+1}^{N}\frac{\partial\phi(\vec{q}_i,\vec{q}_j,\vec{\alpha}_i,\vec{\alpha}_j)}{\partial\vec{\alpha}_i}\,d\vec{\Gamma}^{(s)}\nonumber\\
  &=\fnmark (N-s)\int\sum_{i=1}^{s}\frac{\partial \pi}{\partial \vec{p}_i}\cdot\frac{\partial\phi(\vec{q}_i,\vec{q}_{s+1},\vec{\alpha}_i,\vec{\alpha}_{s+1})}{\partial\vec{q}_i}\,d\vec{\Gamma}^{(s)}\nonumber\\
  &+(N-s)\int\sum_{i=1}^{s}\frac{\partial \pi}{\partial \vec{\varsigma}_i}\cdot\frac{\partial\phi(\vec{q}_i,\vec{q}_{s+1},\vec{\alpha}_i,\vec{\alpha}_{s+1})}{\partial\vec{\alpha}_i}\,d\vec{\Gamma}^{(s)}\nonumber\\
  &=\fnmark(N-s)\int\sum_{i=1}^{s}\frac{\partial f_{s+1}}{\partial \vec{p}_i}\cdot\frac{\partial\phi(\vec{q}_i,\vec{q}_{s+1},\vec{\alpha}_i,\vec{\alpha}_{s+1})}{\partial\vec{q}_i}\,d\vec{\Gamma}_{s+1}\nonumber\\
  &+(N-s)\int\sum_{i=1}^{s}\frac{\partial f_{s+1}}{\partial \vec{\varsigma}_i}\cdot\frac{\partial\phi(\vec{q}_i,\vec{q}_{s+1},\vec{\alpha}_i,\vec{\alpha}_{s+1})}{\partial\vec{\alpha}_i}\,d\vec{\Gamma}_{s+1}.\nonumber
\end{align}
\fntext{Here we use the fact that we are summing over $j=s+1,\dots,n$ and this amounts to $n-s$ identical contributions, thanks to the permutation symmetry of $\pi$, which comes from the fact that particles are indistinguishable from one another.}
\fntext{Here we use \eqref{eq:ParticleDist}.}
We have the BBGKY hierarchy for the Boltzmann--Curtiss equation, i.e.~
\begin{align}
  \frac{\partial f_s}{\partial t} + \{f_s,\mathcal{H}_s\} &= (N-s)\int\sum_{i=1}^{s}\frac{\partial f_{s+1}}{\partial \vec{p}_i}\cdot\frac{\partial\phi(\vec{q}_i,\vec{q}_{s+1},\vec{\alpha}_i,\vec{\alpha}_{s+1})}{\partial\vec{q}_i}\,d\vec{\Gamma}_{s+1}\\
  &+(N-s)\int\sum_{i=1}^{s}\frac{\partial f_{s+1}}{\partial \vec{\varsigma}_i}\cdot\frac{\partial\phi(\vec{q}_i,\vec{q}_{s+1},\vec{\alpha}_i,\vec{\alpha}_{s+1})}{\partial\vec{\alpha}_i}\,d\vec{\Gamma}_{s+1}.\nonumber
\end{align}
By abuse of notation, we will denote $f_s$ the function $f_s$ multiplied by the factor $\frac{N!}{(N-s)!}$ to obtain the following expression for the BBGKY hierarchy, i.e.~
\begin{align}
  \frac{\partial f_s}{\partial t} + \{f_s,\mathcal{H}_s\} &=\int\sum_{i=1}^{s}\frac{\partial f_{s+1}}{\partial \vec{p}_i}\cdot\frac{\partial\phi(\vec{q}_i,\vec{q}_{s+1},\vec{\alpha}_i,\vec{\alpha}_{s+1})}{\partial\vec{q}_i}\,d\vec{\Gamma}_{s+1}\\
  &+\int\sum_{i=1}^{s}\frac{\partial f_{s+1}}{\partial \vec{\varsigma}_i}\cdot\frac{\partial\phi(\vec{q}_i,\vec{q}_{s+1},\vec{\alpha}_i,\vec{\alpha}_{s+1})}{\partial\vec{\alpha}_i}\,d\vec{\Gamma}_{s+1}.\nonumber
\end{align}
In particular, we focus our attention on the first two terms of the BBGKY hierarchy, i.e.~
\begin{align}
  \frac{\partial f_1}{\partial t} + \frac{\vec{p}_1}{m}\cdot\frac{\partial f_1}{\partial \vec{q}_1}+(\Xi^T_1 \mathbb{I}\Xi_1)^{-1}\vec{\varsigma}_1\cdot \frac{\partial f_1}{\partial \vec{\alpha}_1} &= \int \frac{\partial \phi(\vec{q}_1,\vec{q}_2,\vec{\alpha}_1,\vec{\alpha}_2)}{\partial \vec{q}_1}\frac{\partial f_2}{\partial\vec{p}_1}\,d\vec{\Gamma}_2\label{eq:BBGKY1}\\
  &+\int\frac{\partial \phi(\vec{q}_1,\vec{q}_2,\vec{\alpha}_1,\vec{\alpha}_2)}{\partial \vec{\alpha}_1}\frac{\partial f_2}{\partial \vec{\varsigma}_1}\,d\vec{\Gamma}_2\nonumber
\end{align}
\begin{align}
  \frac{\partial f_2}{\partial t} + \frac{\vec{p}_1}{m}\cdot\frac{\partial f_2}{\partial \vec{q}_1}&+(\Xi^T_1 \mathbb{I}\Xi_1)^{-1}\vec{\varsigma}_1\cdot \frac{\partial f_2}{\partial \vec{\alpha}_1} + \frac{\vec{p}_2}{m}\cdot \frac{\partial f_2}{\partial \vec{q}_2}+(\Xi^T_2 \mathbb{I}\Xi_2)^{-1}\vec{\varsigma}_2\cdot\frac{\partial f_2}{\partial \vec{\alpha}_2}\label{eq:BBGKY2}\\
  &- \frac{\partial\phi(\vec{q}_1,\vec{q}_2,\vec{\alpha}_1,\vec{\alpha}_2)}{\partial \vec{q}_1}\frac{\partial f_2}{\partial\vec{p}_1} - \frac{\partial \phi(\vec{q}_1,\vec{q}_2,\vec{\alpha}_1,\vec{\alpha}_2)}{\partial \vec{\alpha}_1}\frac{\partial f_2}{\partial \vec{\varsigma}_1}=0\nonumber
\end{align}
where the right-hand side of the last equation vanishes because we have assumed $f_3\equiv 0$ since we are working under the assumption that we only have binary collisions.
Indeed under the Boltzmann--Grad limit it is possible to justify more precisely that $f_3$ is negligible; we redirect the interested reader to \cite{huangKerson,kardar, harris}.
Notice now that we can add under the integral appearing on the right-hand side of \eqref{eq:BBGKY1} and \eqref{eq:BBGKY2} terms of the form $\frac{\partial \phi(\vec{q}_1,\vec{q}_2,\vec{\alpha}_1,\vec{\alpha}_2)}{\partial \vec{q}_2}\frac{\partial f_2}{\partial\vec{p}_2}$ and $\frac{\partial \phi(\vec{q}_1,\vec{q}_2,\vec{\alpha}_1,\vec{\alpha}_2)}{\partial \vec{\alpha}_2}\frac{\partial f_2}{\partial \vec{\varsigma}_2}$ respectively, since they will vanish due to the fact that they are exact divergences.
We focus our attention on the first two terms of the BBGKY hierarchy, i.e.~
\begin{align}
  \frac{\partial f_1}{\partial t} + \frac{\vec{p}_1}{m}\cdot\frac{\partial f_1}{\partial \vec{q}_1}+(\Xi^T_1 \mathbb{I}\Xi_1)^{-1}\vec{\varsigma}_1\frac{\partial f_1}{\partial \vec{\alpha}_1} &= \int \frac{\partial \phi(\vec{q}_1,\vec{q}_2,\vec{\alpha}_1,\vec{\alpha}_2)}{\partial \vec{q}_1}\Big(\frac{\partial f_2}{\partial\vec{p}_1}-\frac{\partial f_2}{\partial \vec{p}_2}\Big)\,d\vec{\Gamma}_2\label{eq:BBGKY12}\\
  &+\int\frac{\partial \phi(\vec{q}_1,\vec{q}_2,\vec{\alpha}_1,\vec{\alpha}_2)}{\partial \vec{\alpha}_1}\frac{\partial f_2}{\partial \vec{\varsigma}_1}d\vec{\Gamma}_2\nonumber\\
  &+\int\frac{\partial \phi(\vec{q}_1,\vec{q}_2,\vec{\alpha}_1,\vec{\alpha}_2)}{\partial \vec{\alpha}_2}\frac{\partial f_2}{\partial \vec{\varsigma}_2}d\vec{\Gamma}_2,\nonumber\\
  \frac{\partial f_2}{\partial t} + \frac{\vec{p}_1}{m}\cdot \frac{\partial f_2}{\partial \vec{q}_1}+(\Xi^T_1 \mathbb{I}\Xi_1)^{-1}\vec{\varsigma}_1\cdot\frac{\partial f_2}{\partial \vec{\alpha}_1} &+ \frac{\vec{p}_2}{m}\cdot\frac{\partial f_2}{\partial \vec{q}_2}+(\Xi^T_2 \mathbb{I}\Xi_2)^{-1}\vec{\varsigma}_2\cdot\frac{\partial f_2}{\partial \vec{\alpha}_2}\label{eq:BBGKY22}\\
  &- \frac{\partial\phi(\vec{q}_1,\vec{q}_2,\vec{\alpha}_1,\vec{\alpha}_2)}{\partial \vec{q}_1}\Big(\frac{\partial f_2}{\partial\vec{p}_1}-\frac{\partial f_2}{\partial \vec{p}_2}\Big) \nonumber\\
  &- \frac{\partial \phi(\vec{q}_1,\vec{q}_2,\vec{\alpha}_1,\vec{\alpha}_2)}{\partial \vec{\alpha}_1}\frac{\partial f_2}{\partial \vec{\varsigma}_1}\nonumber\\
  &- \frac{\partial \phi(\vec{q}_1,\vec{q}_2,\vec{\alpha}_1,\vec{\alpha}_2)}{\partial \vec{\alpha}_2}\frac{\partial f_2}{\partial \vec{\varsigma}_2}=0,\nonumber
\end{align}
We notice that there are two different length scales in both \eqref{eq:BBGKY1} and \eqref{eq:BBGKY2}.
In particular, the left-hand side of \eqref{eq:BBGKY1} acts on the macroscopic length scale while the right-hand side acts on the microscopic length scale.
We introduce fast varying coordinates, i.e.~$\vec{q} = \vec{q}_2-\vec{q}_1$ and $\vec{\sigma}=\vec{\alpha}_2-\vec{\alpha}_1$ together with slowly varying coordinates, i.e.~$\vec{Q}=\frac{\vec{q}_2+\vec{q}_1}{2}$ and $\vec{\Sigma}=\frac{\vec{\alpha}_2+\vec{\alpha}_1}{2}$.
Now we can rewrite \eqref{eq:BBGKY2} in terms of the new coordinates as,
\begin{align}
  \frac{\partial f_2}{\partial t}\!+ \frac{1}{2}\frac{\vec{p}_2\!+\!\vec{p}_1}{m}\cdot\frac{\partial f_2}{\partial \vec{Q}}+\gr{\vec{A}}\cdot\frac{\partial f_2}{\partial {\vec{\Sigma}}}\boxed{\!+\frac{\vec{p}_2\!-\!\vec{p}_1}{m}\cdot\frac{\partial f_2}{\partial \vec{q}}\!+\!\gr{\vec{B}}\cdot\frac{\partial f_2}{\partial \vec{\sigma}}}&\label{eq:BBGKY2Separated}\\
  \boxed{-\frac{\partial \phi(\vec{q}_1,\vec{q}_2,\vec{\alpha}_1,\vec{\alpha}_2)}{\partial \vec{q}_1}\Big(\frac{\partial f_2}{\partial\vec{p}_1}-\frac{\partial f_2}{\partial \vec{p}_2}\Big)}&\nonumber\\
  \boxed{-\frac{\partial \phi(\vec{q}_1,\vec{q}_2,\vec{\alpha}_1,\vec{\alpha}_2)}{\partial \vec{\alpha}_1}\frac{\partial f_2}{\partial \vec{\varsigma}_1}-\frac{\partial \phi(\vec{q}_1,\vec{q}_2,\vec{\alpha}_1,\vec{\alpha}_2)}{\partial \vec{\alpha}_2}\frac{\partial f_2}{\partial \vec{\varsigma}_2}}\!&=\!0,\nonumber
\end{align}
where ${\vec{A}=\frac{1}{2}\left((\Xi_1^T\mathbb{I}\Xi_1)^{-1}\vec{\varsigma}_1+(\Xi_2^T\mathbb{I}\Xi_2)^{-1}\vec{\varsigma}_2\right)}$, ${\vec{B}=\left((\Xi_2^T\mathbb{I}\Xi_2)^{-1}\vec{\varsigma}_2-(\Xi_1^T\mathbb{I}\Xi_1)^{-1}\vec{\varsigma}_1\right)}$ and we have boxed all the terms acting on the microscopic scale length. We proceed to treat the terms acting on different scale lengths separately, and we notice that the terms in the previous equation corresponding to the microscopic scale length can be treated as steady with respect to the collisional time scale,
\begin{align}
  \frac{\vec{p}_2\!-\!\vec{p}_1}{m}\cdot \frac{\partial f_2}{\partial \vec{q}}+\gr{\vec{B}}\cdot\frac{\partial f_2}{\partial \vec{\sigma}}\label{eq:BBGKYSimply}
  =\frac{\partial\phi(\vec{q}_1,\vec{q}_2,\vec{\alpha}_1,\vec{\alpha}_2)}{\partial \vec{q}_1}\Big(\frac{\partial f_2}{\partial\vec{p}_1}-\frac{\partial f_2}{\partial \vec{p}_2}\Big)&\\
  +\frac{\partial \phi(\vec{q}_1,\vec{q}_2,\vec{\alpha}_1,\vec{\alpha}_2)}{\partial \vec{\alpha}_1}\frac{\partial f_2}{\partial \vec{\varsigma}_1}+\frac{\partial \phi(\vec{q}_1,\vec{q}_2,\vec{\alpha}_1,\vec{\alpha}_2)}{\partial \vec{\alpha}_2}\frac{\partial f_2}{\partial \vec{\varsigma}_2}.&\nonumber
\end{align}
Substituting \eqref{eq:BBGKYSimply} into \eqref{eq:BBGKY1} we obtain,
\begin{align}
  &\frac{\partial f_1}{\partial t} + \frac{\vec{p}_1}{m}\cdot\frac{\partial f_1}{\partial \vec{q}_1}+(\Xi^T_1 \mathbb{I}\Xi_1)^{-1}\vec{\varsigma}_1\frac{\partial f_1}{\partial \vec{\alpha}_1} = \left. \frac{d \,f_1}{d\,t}\right\lvert_{\vspace{0.5pt} \text{collision}},\label{eq:PreBoltzmann}\\
  &\left. \frac{d \,f_1}{d\,t}\right\lvert_{\vspace{0.5pt} \text{collision}} \!\!\!\!= \int \left(\frac{\vec{p}_2\!-\!\vec{p}_1}{m}\cdot\frac{\partial f_2}{\partial \vec{q}}+\gr{\vec{B}}\cdot\frac{\partial f_2}{\partial \vec{\sigma}}\,d\vec{\Gamma}_2\right),\label{eq:PreCollision}
\end{align}
where the right-hand side of \eqref{eq:PreCollision} represents the total variation of $f_1$ due to collisions between two molecules.
\section{The rigid collision operator}
\label{sec:BoltzmannCurtissCollision}
In the previous section, we derived a hierarchy of equations to describe the evolution of the reduced distribution functions $f_s$.
In particular, each equation in the hierarchy depended on a higher-order reduced particle distribution, hence the system of equations was not closed. To solve this issue we removed the dependence on $f_3$ from the second equation in the hierarchy, i.e.~\eqref{eq:BBGKY2}, by using the physical assumption that we are only considering binary collisions.
We will now focus our attention on a way to remove the dependence on $f_2$ from the first equation in the hierarchy, i.e.~\eqref{eq:BBGKY1} when considering the collision between two hard convex molecules.

Notice that since $f_3$ vanishes we are only considering collisions between two molecules. We will from now on label these two molecules as molecule 1 and molecule 2 and use the notation $\vec{\Gamma}_1$ and $\vec{\Gamma}_2$ to denote the phase space coordinates of the two molecules respectively.
In particular, we will consider a reference frame having origin in the center of mass of molecule 1 and imagine that molecule 2 is approaching, colliding and then moving away from molecule 1.

We begin expanding $d\vec{\Gamma}_2$ in \eqref{eq:PreCollision} and making use of the fact that since we are only considering collisions between rigid bodies $f_2$ will vanish outside of the translated excluded volume $V=V_{ex}^{1,2}(\sigma)+\vec{q}_1$, hence we can restrict the domain of integration to $V$, i.e.~
\begin{align}
  \left. \frac{d \,f_1}{d\,t}\right\lvert_{\vspace{0.5pt} \text{collision}}\!\!\!\!\!\!\!&= \int d\vec{\varsigma}_2d\vec{p}_2d\vec{\alpha}_2\int_{V} \frac{\vec{p}_2\!-\!\vec{p}_1}{m}\cdot\frac{\partial f_2}{\partial \vec{q}}\,\gr{d\vec{q}}+\!\!\int d\vec{\varsigma}_2d\vec{p}_2d\vec{\alpha}_2\!\!\int_{V}\!\!\!\gr{\vec{B}}\cdot\frac{\partial f_2}{\partial \vec{\sigma}}\,\gr{d\vec{q}}\\
  &=\!\!\! \int d\vec{\varsigma}_2d\vec{p}_2d\vec{\alpha}_2\int_{\partial V} \frac{\vec{p}_2\!-\!\vec{p}_1}{m}\cdot\vec{k}f_2d\vec{S}+\!\!\int d\vec{\varsigma}_2d\vec{p}_2d\vec{\alpha}_2\int_{V}\!\!\!\gr{\vec{B}}\cdot\frac{\partial f_2}{\partial \vec{\sigma}}\,d\gr{\vec{q}}.\;\;\label{eq:rigidCollision}
\end{align}
We now observe that the translated excluded volume $V$ is a strictly convex set with smooth boundary, hence we can apply results from differential geometry of surfaces with positive curvature.
In particular, we can transform the integral on the boundary of $V$ into an integral on the surface of the unit sphere $\mathbb{S}^2$, making use of the element of surface area per solid angle $dS$ that can be computed explicitly \cite{curtissV}, and rewrite \eqref{eq:rigidCollision} as
\begin{align}
  \left. \frac{d \,f_1}{d\,t}\right\lvert_{\vspace{0.5pt} \text{collision}} \!\!\!\!&= \int d\vec{\varsigma}_2d\vec{p}_2d\vec{\alpha}_2 \int_{\mathbb{S}^2}  \frac{\vec{p}_2\!-\!\vec{p}_1}{m}\cdot\vec{k}f_2dS\,d\vec{k}\\
  &+\int d\vec{\varsigma}_2d\vec{p}_2d\vec{\alpha}_2\int_{V}\gr{\vec{B}}\cdot\frac{\partial f_2}{\partial \vec{\sigma}}\,\gr{d\vec{q}}.
\end{align}
We now need to focus on the second term in the previous equation. In particular we need to rewrite it in terms of the surface element of the unit sphere $\mathbb{S}^2$. Notice that $\gr{\vec{B}= \dot{\vec{\sigma}}}$, and so we can rewrite the second term in the previous equation as
\begin{align}
  \int \!\!d\vec{\varsigma}_2d\vec{p}_2d\vec{\alpha}_2\int_{V}\dot{\vec{\sigma}}\cdot\frac{\partial f_2}{\partial \vec{\sigma}}\,d\vec{q}_2 &=\!\! \int d\vec{\varsigma}_2d\vec{p}_2d\vec{\alpha}_2\int_{\mathbb{S}^2}\dot{\vec{\sigma}}\gr{\frac{\partial\vec{q}}{\partial\vec{\sigma}}}\cdot\frac{\partial f_2}{\partial \vec{q}}\,\gr{d\vec{q}}\\
\end{align}
 By virtue of {proposition \ref{thm:totalDerivativeDirector}} we can express $\gr{\dot{\vec{\sigma}}\frac{\partial \vec{q}}{\vec{\partial \sigma}}}$ as $\vec{\omega}_1\times\vec{g}_1-\vec{\omega}_2\times\vec{g}_2$, where $\vec{g}_1$ and $\vec{g}_2$ are the vectors connecting $\vec{q}_1$ and $\vec{q}_2$ with the contact point $\vec{\zeta}$. In particular we can rewrite \eqref{eq:rigidCollision} as
 \begin{equation}
  \left. \frac{d \,f_1}{d\,t}\right\lvert_{\vspace{0.5pt} \text{collision}} \!\!\!\! = \int d\vec{\varsigma}_2d\vec{p}_2d\vec{\alpha}_2\int_{\mathbb{S}^2}  \vec{\mathfrak{g}}\cdot\vec{k}f_2\,dS\,d\vec{k},\label{eq:preAnsatzCollision}
 \end{equation}
 where $\vec{\mathfrak{g}}$ is the relative velocity of the contact point $\vec{\zeta}$, i.e.~
 \begin{equation}
  \color{orange}\vec{\mathfrak{g}} = \frac{1}{m}\left(\vec{p}_1-\vec{p}_2\right)+\vec{\omega}_1\times\vec{g}_1-\vec{\omega}_2\times\vec{g}_2.
 \end{equation}
We decompose $\mathbb{S}^2$ into two surfaces: $\mathbb{S}_+$ is the portion of the surface $\mathbb{S}^2$ such that along the collision trajectory molecule 2 is coming towards molecule 1, i.e.~$\mathfrak{g}\cdot \vec{k}$ is positive, while $\mathbb{S}_-$ is the portion of the surface $\mathbb{S}^2$ such that along the collision trajectory molecule 2 is moving away from molecule 1, i.e.~$\mathfrak{g}\cdot \vec{k}$ is negative.
With this we split the integral in \eqref{eq:preAnsatzCollision} into two integrals on $\mathbb{S}_+$ and $\mathbb{S}_-$ respectively, i.e.~
\begin{align}
  \left. \frac{d \,f_1}{d\,t}\right\lvert_{\vspace{0.5pt} \text{collision}} \!\!\!\! &= \int d\vec{\varsigma}_2d\vec{p}_2d\vec{\alpha}_2\int_{\mathbb{S}_+}  \vec{\mathfrak{g}}\cdot\vec{k}f_2\,dS\,d\vec{k}+\int d\vec{\varsigma}_2d\vec{p}_2d\vec{\alpha}_2\int_{\mathbb{S}_-}  \vec{\mathfrak{g}}\cdot\vec{k}f_2\,dS\,d\vec{k}.\label{eq:gainLossCollision}
\end{align}
The first term in the previous equation is known as the \emph{gain term}, while the second term is known as the \emph{loss term}.
Notice now that we can make use of the \emph{molecular chaos assumption} \cite{maxwell}, i.e.~$f_2(\vec{\Gamma}_1,\vec{\Gamma}_2,t)=f_1(\vec{\Gamma}_1,t)f_1(\vec{\Gamma}_2,t)$ on the loss term to obtain
\begin{align}
  \left.\frac{df_1}{dt}\right\vert_{\vspace{0.5pt}\text{collision}} \!\!\!\!&= \int\! d\vec{p}_2\,d\vec{\varsigma}_2\,d\vec{\alpha}\!\int_{\mathbb{S}_{+}}\!\!(\vec{k}\cdot \vec{\mathfrak{g}})f_2\,dSd\vec{k} \\
  &+ \int\!d\vec{p}_2\,d\vec{\varsigma}_2\,d\vec{\alpha}\!\int_{\mathbb{S}_{-}}\!\!(\vec{k}\cdot\vec{g})f_1(\vec{q}_1,\vec{\alpha}_1,\vec{p}_1,\vec{\varsigma}_1)f_1(\vec{q}_2,\vec{\alpha}_2,\vec{p}_2,\vec{\varsigma}_2)\,S(\vec{k})d\vec{k}.\nonumber
\end{align}
As discussed in \cite{stronge}, since energy and linear and angular momentum are conserved after the collision we can trace back the post-collisional coordinates in the gain term of \eqref{eq:gainLossCollision} to the corresponding pre-collisional coordinates. Applying the molecular chaos assumption once again we obtain,
\begin{align}
  \left.\frac{df_1}{dt}\right\vert_{\vspace{0.5pt}\text{collision}}\!\!\!\! &= \int\! d\vec{p}_2\,d\vec{\varsigma}_2\,d\vec{\alpha}\!\int_{\mathbb{S}_{+}}\!\!(\vec{k}\cdot \vec{\mathfrak{g}})f_1(\vec{q}_1^{\prime},\vec{\alpha}_1^{\prime},\vec{p}_1^{\prime},\vec{\varsigma}_1^{\prime},t)f_1(\vec{q}_2^{\prime},\vec{\alpha}_2^{\prime},\vec{p}_2^{\prime},\vec{\varsigma}_2^{\prime},t)\,dSd\vec{k}\\
  &+ \int\!d\vec{p}_2\,d\vec{\varsigma}_2\,d\vec{\alpha}\!\int_{\mathbb{S}_{-}}\!\!(\vec{k}\cdot\vec{\mathfrak{g}})f_1(\vec{q}_1,\vec{\alpha}_1,\vec{p}_1,\vec{\varsigma}_1,t)f_1(\vec{q}_2,\vec{\alpha}_2,\vec{p}_2,\vec{\varsigma}_2,t)\,dSd\vec{k}.\nonumber
\end{align}
Lastly, we shorten the notation by writing the collision operator in the familiar way:
\begin{gather}
  \left.\frac{df_1}{dt}\right\vert_{\vspace{0.5pt} \text{collision}} = C[f_1,f_1],
\end{gather}
with
\begin{equation}
C[f,g] \coloneqq  \int\!\!\!\!\int\!\!\!\!\int\!\!\!\!\int (g^\prime f^\prime-gf)(\vec{k}\cdot\vec{\mathfrak{g}})dSd\vec{k}d\vec{p}_2d\vec{\alpha}_2d\vec{\varsigma}_2,
\end{equation}
where the arguments of $f$ are $(\vec{q}_1,\vec{\alpha}_1,\vec{p}_1,\vec{\varsigma}_1,t)$ and the arguments of $g$ are $(\vec{q}_2,\vec{\alpha}_2,\vec{p}_2,\allowbreak \vec{\varsigma}_2,t)$.
\section*{Acknowledgments} The authors would like to acknowledge E.~Virga, whose insights made this work possible.
{
The authors would also like to thank J.~M\'alek and O.~Sou\v{c}ek for their help in correcting some mistakes in an earlier draft of this work.
}
PEF acknowledges support from the UK Engineering and Physical Sciences Research Council [EPSRC grants EP/R029423/1 and EP/W026163/1].
\bibliographystyle{siamplain}
\bibliography{references}
\end{document}